\definecolor{CBF-orange}{HTML}{FFC107}
\definecolor{CBF-blue}{HTML}{1E88E5}
\definecolor{CBF-green}{HTML}{004D40}
\definecolor{CBF-pink}{HTML}{D81B60}
\title{NervePool: A Simplicial Pooling Layer}
\author{Ernst R{\"o}ell \\ Helmholtz Munich and Technical University of Munich, Germany \\ \texttt{ernst.roell@helmholtz-munich.de}
\and Sarah McGuire Scullen \\ Pacific Northwest National Lab, USA \\ \texttt{sarah.scullen@pnnl.gov}
\and Elizabeth Munch \\ Michigan State University, USA \\ \texttt{muncheli@msu.edu}
\and Bastian Rieck \\ University of Fribourg, Switzerland \\ \texttt{bastian.grossenbacher@unifr.ch}
\and Matthew Hirn \\ Michigan State University, USA \\ \texttt{mhirn@msu.edu}}
\date{}
\newcommand{\revision}[1]{#1}
\newcommand{\Ab}{\mathbf{A}}
\newcommand{\Bb}{\mathbf{B}}
\newcommand{\Db}{\mathbf{D}}
\newcommand{\Sb}{\mathbf{S}}
\newcommand{\Xb}{\mathbf{X}}
\newcommand{\Zb}{\mathbf{Z}}
\newcommand{\cU}{\mathcal{U}}
\newcommand{\R}{\mathbb{R}}
\newcommand{\nervePool}{{\scshape{NervePool}}}
\newcommand{\NervePool}{{\scshape{NervePool}}}
\newcommand{\nervepool}{{\scshape{NervePool}}}
\newcommand{\modelname}{{\scshape{NervePool}}}
\newcommand{\St}{\mathrm{St}}
\newcommand{\Nrv}{\mathrm{Nrv}}
\newcommand{\MPSN}{{\mathrm{MPSN}}}
\newcommand{\softmax}{\mathrm{softmax}}
\newtheorem{theorem}{Theorem}
\newenvironment{romanenumerate}{\begin{enumerate}[label=(\roman*)]}{\end{enumerate}}
\begin{document}
\maketitle

\begin{abstract}
	For deep learning problems on graph-structured data,
	pooling layers are important for down sampling, reducing
	computational cost, and to minimize overfitting.
We define a pooling layer, \nervePool{}, for data
	structured as simplicial complexes, which are
	generalizations of graphs that include higher-dimensional
	simplices beyond vertices and edges; this structure allows
	for greater flexibility in modeling higher-order relationships.
The proposed simplicial coarsening scheme is built upon
	partitions of vertices, which allow us to generate
	hierarchical representations of simplicial complexes,
	collapsing information in a learned fashion.
\nervePool{} builds on the learned vertex cluster
	assignments and extends to coarsening of higher dimensional
	simplices in a deterministic fashion.
While in practice the pooling operations are computed via
	a series of matrix operations, the topological motivation
	is a set-theoretic construction based on unions of stars of
	simplices and the nerve complex.
\end{abstract}
 \section{Introduction}
Design of deep learning architectures for tasks on spaces of
topological objects has seen rapid development, fueled by the
desire to model higher-order interactions that are naturally
occurring in data.
In this topological framework, we can consider data
structured as simplicial complexes, generalizations of graphs
that include higher-dimensional simplices beyond vertices and edges;
this structure allows for greater flexibility in modeling
higher-order relationships.
Concepts from graph signal processing have been generalized
for this higher-order network setting, leveraging operators
such as Laplacian matrices
\cite{Schaub2021SigProcessingHigherOrderNetworks,
	roddenberry2021cellcomplexsignalproc, sardellitti2022topological}.
Subsequently, there have been many developments regarding
deep learning architectures that leverage simplicial (and
cell) complex structure.
These methods have been designed through the lens of both
convolutional neural networks (CNNs)
\cite{ebli2020SNN,yang2021SCNN,bunch2020simplicial,yang2023SCCNN,
	roddenberry21a, Keros2022Dist2Cycle} and message passing
neural networks \cite{bodnar2021simplicial, bodnar2021CW}.
For the purposes of this paper, we refer to neural networks
within this general family as simplicial neural networks, and
note that our work could be adapted and used for pooling
within any of these architectures.

Typically, the pooling methods for graphs are generalizations
of standard pooling layers for grids (i.e.~those used in CNNs).
As opposed to CNNs, which rely on a natural notion of spatial
locality in the data to apply convolutions in local sections,
the non-regular structure of graphs makes the spatial
locality of graph pooling less obvious \cite{ying2018diffpool}.
For this reason, many graph pooling layers for graph neural
networks (GNNs) rely on local structural information
represented by adjacency matrices, and coarsening operations
applied directly on local graph neighborhoods.
A recent survey of GNN pooling methods proposed a general
framework for the operations which define different pooling
layers: \emph{selection}, \emph{reduction}, and \emph{connection} (SRC)
\cite{2021_SRCgraphpooling}.
Selection refers to the method in which vertices of the input
graph are grouped into clusters; reduction computation is the
aggregation of vertices into meta-vertices (and
correspondingly aggregating features on vertices); connection
refers to determining adjacency information of the
meta-vertices and outputting the pooled graph.
This broad categorization of graph pooling methods into three
computational steps provides a useful framework which can be
extended in a natural way to describe simplicial complex pooling.

In the simplicial complex domain, the notion of spatial
locality necessary for pooling is further complicated due to
the inclusion of higher-dimensional simplices.
This task of coarsening simplicial complexes requires
additional considerations to those of the graph coarsening
task, primarily to ensure that the pooled representation
upholds the definition of a simplicial complex.
Naturally, simplicial complex coarsening shares some
challenges with the task of coarsening graph structured data:
lack of inherent spatial locality and differing input sizes
(varying numbers of nodes and edges).
However, the additional challenge for coarsening in the
simplicial complex setting is the addition of higher
dimensional simplices, with face (and coface) relations in
both dimension directions by the definition of a simplicial complex.
There is also a notable computational challenge when dealing
with simplicial complexes due to the inherent size expansion
with the addition of higher-dimensional simplices.
A simplicial complex with all possible simplices included is
essentially the power set of its vertices.
Thus, controlling the computational explosion when using
simplicial complexes in deep learning frameworks motivates
the use of a pooling layer defined on the space of simplicial complexes.
These pooling layers can be interleaved with regular
simplicial neural network layers in order to reduce the size
of the simplicial complex, which reduces computational
complexity of the model and can also limit overfitting.

Existing graph pooling approaches include \emph{cluster-based methods}
(e.g.\ DiffPool \cite{ying2018diffpool}, MinCutPool
\cite{bianchi2020spectral}), i.e., methods based on learning soft
cluster assignments for vertices to provide new graph-level outputs at
multiple scales, and \emph{sorting methods} (e.g.\ TopK
Pool
\cite{gao2019graphUnets,cangea2018sparsegraph,knyazev2019understanding},
SAGpool \cite{lee2019self}), i.e., methods that rank vertices and
provide coarsening via induced subgraphs.
There are also topologically motivated graph pooling methods
such as Structural Deep Graph Mapper \cite{bodnar2021mapper},
which is based on soft cluster assignments.
It uses differentiable and fixed PageRank-based lens
functions for the standard Mapper algorithm \cite{singh2007_mapper}.
Additionally, there is a method for graph pooling which uses
maximal cliques \cite{luzhnica2019clique}, assigning vertices
to meta-vertices using topological information encoded in the
graph structure.
However, if directly generalized for simplicial complexes,
this clique-based coarsening method never retains any higher
dimensional simplices since the pooled output is always a graph.
Recently, general strategies for pooling simplicial complexes
were proposed \cite{cinque2022poolingSCN}, which directly
generalize different graph pooling methods to act on
simplicial complexes.
This framework aims to generalize graph pooling
methods for simplicial complexes, albeit using different
tools~(i.e., signal processing) from what we use here.
The reader interested in understanding how pooling approaches can be
unified and harmonized using the language of combinatorial complexes is recommended to refer to Hajij et
al.~\cite{Hajij2022} or Papamarkou et al.~\cite{Papamarkou2024}, with
the latter providing an introduction to the nascent field of
\emph{topological deep learning} into which we would contextualize our
proposed method.

In this paper we introduce \NervePool{}, which extends
existing methods for graph pooling to be defined on the full
simplicial complex using standard tools from combinatorial topology.
While in practice, the pooling operations are computed via a
series of matrix operations, \nervePool{} has a compatible
topological formulation which is based on unions of stars of
simplices and a nerve complex construction.
Like graph pooling methods, the \NervePool{}  layer for
simplicial complexes can also be categorized by the SRC graph
pooling framework \cite{2021_SRCgraphpooling}, with necessary
extensions for higher dimensional simplices.
The main contributions of this work are as follows:
\begin{itemize}
	\item We propose a learned coarsening method for simplicial
	      complexes called \nervePool{}, which can be used within
	      neural networks defined on the space of simplicial
	      complexes, for any standard choice of graph pooling on the vertices.
\item Under the assumption of a hard partition of vertices,
	      we prove that \nervePool{} maintains invariance
	      properties of pooling layers necessary for simplicial
	      complex pooling in neural networks, as well as additional
	      properties to maintain simplicial complex structure after pooling.
\end{itemize}

Our implementation and full experimental setup is available at
\url{https://github.com/aidos-lab/nerve-pool} under a BSD-3-Clause license.

 \section{Background}
\label{sec:background}
This section gives background context for some important
concepts in topological data analysis (TDA) and deep learning
methods for data modeled as simplicial complexes.
We first introduce relevant background on simplicial
complexes, chain complexes, boundary maps between linear
spaces generated by $p$-simplices, and simplicial maps.
Additionally, we outline different notions of local
neighborhoods on simplicial complexes through adjacency and
coadjacency relations.

\subsection{Simplicial Complexes}\label{sec:SimplicialComplexes}
A simplicial complex is a generalization of a graph or network.
While graphs can model relational information between pairs
of objects (via vertices and edges connecting them),
simplicial complexes are able to model higher-order interactions.
With this construction, we can still model pair-wise
interactions as edges, but also triple interactions as
triangles, four-way interactions as tetrahedra, and so on.

The building blocks of simplicial complexes are
\textit{$p$-dimensional simplices} (or $p$-simplices for
short), formally defined as a set of $p+1$ vertices $v_i \in V$,
\begin{equation*}
	\sigma_p = (v_0,v_1,\hdots,v_p )\, ,
\end{equation*}
where $V$ is a non-empty vertex set.
Note that we often use a subscript on the simplex to denote
the dimension of the simplex, i.e.~$\dim(\sigma_p) = p$.
The cyclic ordering of the vertices of a simplex 
induces an orientation of that simplex, which can be fixed if necessary for the given task.

Simplicial complexes are collections of these $p$-simplices,
glued together with the constraint that they are closed under
taking subsets.
In other words, an \textit{abstract simplicial complex}, $K$,
is a finite collection of non-empty subsets of $V$ such that
for a simplex $\alpha \in K$, $\beta \subseteq \alpha$
implies $\beta \in K$.
In this case we call $\beta$ a face of $\alpha$ and write
$\beta\leq \alpha$.
Abstract simplicial complexes are combinatorial objects,
defined in terms of collections of vertices ($p$-simplices),
however they have geometric realizations corresponding to
vertices, edges, triangles, tetrahedra, etc.
The dimension of a simplicial complex is defined as the
maximum dimension of all of its simplices:
\begin{equation*}
	\dim(K) = \max_{\sigma \in K} \dim(\sigma)\, ,
\end{equation*}
where the dimension of a simplex $\dim(\sigma)$ is one less
than the cardinality of the vertex set which defines it.
We denote the simplicial complex at layer $\ell$ of a neural
network by $K^{(\ell)}$ and its dimension
$\mathcal{P}^{(\ell)}= \dim(K^{(\ell)})$.
Note that the superscript indexes the neural network layer
and does not represent the $\ell$-skeleton of the complex as
is common in the topology literature.

\paragraph*{Boundary matrices and orientation}
Boundary matrices provide a map between vector spaces
generated by $p$-simplices and $(p-1)$-simplices.
For each dimension $p$, we can define a vector space
$C_{p}(K)$ with a basis given by the set of $p$-simplices and
with its elements given by linear combinations of $p$-simplices.
Such vector spaces are called \textit{chain groups}.
The \textit{boundary map} $\partial_p$ operates on an
oriented simplex $\sigma = [v_0,\cdots, v_p]$ and is defined as:
\begin{equation*}
	\partial_p(\sigma) := \sum_{i=0}^p (-1)^i
	[v_0,\cdots,\widehat{v}_i,\cdots, v_p] \in C_{p-1} (K) \, ,
\end{equation*}
where $\widehat{v}_i$ denotes the removal of vertex $v_i$
from the vertex subset.
Since we are working in a linear space, this definition can
be extended linearly to operate on the entire vector space
$C_p (K)$ generated by  $p$-simplices so that we have a map
$\partial_p : C_p (K) \rightarrow C_{p-1}(K)$ with
\begin{equation*}
	\partial_p (\alpha) := \sum_{i=1}^{n_p} \alpha_i \partial_p
	(\sigma_i) \, .
\end{equation*}

Each boundary matrix captures incidence relations for a given
dimension, keeping track of which simplices of dimension
$p-1$ are faces of a simplex of dimension $p$.
For example, in Fig.~\ref{fig:boundary_matrices}, vertices
$v_0$ and $v_1$ are on the boundary of edge $e_0$ and edge
$e_1$ is on the boundary of face $f_0$.
For each dimension $p$, this incidence-relation information
is encoded in the matrices corresponding to the boundary
operators $\partial_p$.
For clarity in subsequent sections, we equivalently use
notation $\partial_p := \Bb_p \in \R^{n_{p-1} \times n_{p}}$
to represent the matrix boundary operator, with subscripts
indicating the dimension.
When necessary to associate a boundary matrix to a specific
simplicial complex $K$, we denote this matrix by $\Bb_{K,p}$.
Note that $\Bb_0 = \bm{0}$, since there are no simplices of
negative dimension to map down to.

Simplex orientation is encoded in boundary matrices using
$\pm1$, with different orientations of the same simplex
indicated by different placement of negative values in the
boundary matrix.
For each simplex, there are two possible orientations, each
of which are equivalence classes representing all possible
cyclic permutations of the vertices which define the simplex.
We say that $\sigma_{p-1}=[v_0,v_1,\hdots,v_{n_{p-1}}]$ and
$\sigma_{p}=[w_0,w_1,\hdots,w_{n_{p}}]$ have the same
orientation if the ordered set of vertices
$[v_0,v_1,\hdots,v_{n_{p-1}}]$ is contained in any cyclic
permutation of the vertices $w_i$ forming $\sigma_p$.
Conversely, if the ordered set of vertices for $\sigma_{p-1}$
are contained in any cyclic permutations for the other
orientation of $\sigma_p$, then we say that $\sigma_{p-1}$
and $\sigma_p$ have opposite orientation.
Thus, we can fix an orientation and define entries of
oriented $p$-boundary matrices as follows:
\begin{equation*}
	\Bb_p(\sigma_{p-1}, \sigma_p) =
	\begin{cases}
		1  & \text{if } \sigma_{p-1} \text{ and } \sigma_{p}
		\text{ have the same orientation}                                 \\
		-1 & \text{if } \sigma_{p-1} \text{ and } \sigma_{p}
		\text{ have opposite orientation}                                 \\
		0  & \text{if } \sigma_{p-1} \text{ is not a face of } \sigma_{p}
	\end{cases}
\end{equation*}
where $\sigma_p$ and $\sigma_{p-1}$ are codimension-1 simplices in $K$.
If simplex orientation is not necessary, we use the
non-oriented boundary matrix $\left|\Bb_p\right|$.
\begin{figure}
	\centering
	\begin{minipage}[b]{.2\textwidth}
		\raisebox{-0.5\height}{\includegraphics[width=\textwidth]{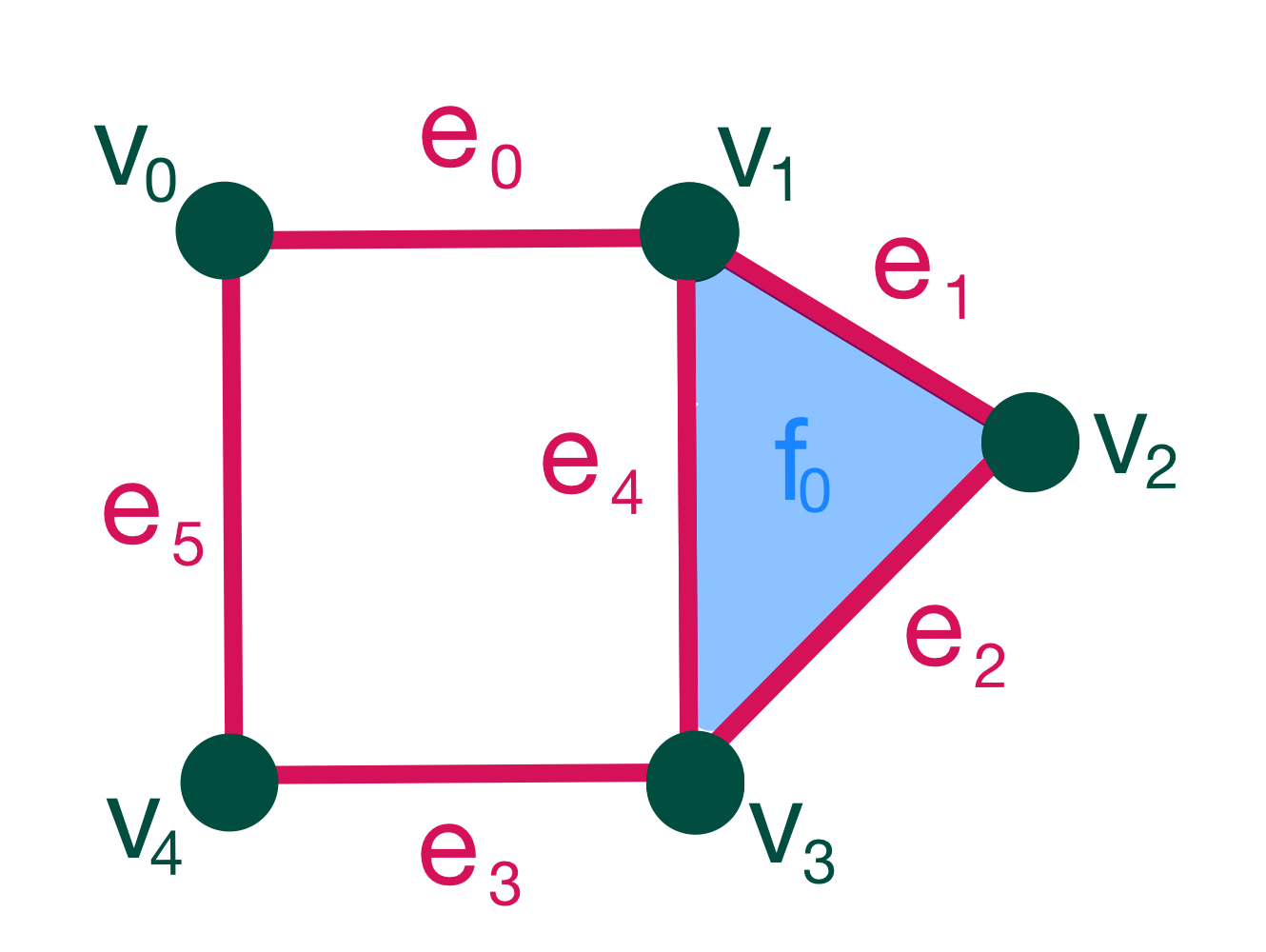}}
	\end{minipage}\begin{minipage}[b]{.8\textwidth}
		\centering
		\small
		\(
		\Bb_1=
		\begin{blockarray}{ccccccc}
			&\textcolor{CBF-pink}{e_0} &\textcolor{CBF-pink}{e_1}
			&
			\textcolor{CBF-pink}{e_2}&\textcolor{CBF-pink}{e_3}&\textcolor{CBF-pink}{e_4}&\textcolor{CBF-pink}{e_5}
			\\
			\begin{block}{c(cccccc)}
				\textcolor{CBF-green}{v_0} & 1 & 0 & 0 & 0 & 0 & 1 \\
				\textcolor{CBF-green}{v_1} & 1 & 1 & 0 & 0 & 1 & 0 \\
				\textcolor{CBF-green}{v_2} & 0 & 1 & 1 & 0 & 0 & 0 \\
				\textcolor{CBF-green}{v_3} & 0 & 0 & 1 & 1 & 1 & 0 \\
				\textcolor{CBF-green}{v_4} & 0 & 0 & 0 & 1 & 0 & 1 \\
			\end{block}
		\end{blockarray}
		\quad
		\Bb_2=
		\begin{blockarray}{cc}
			& \textcolor{CBF-blue}{f_0} \\
			\begin{block}{c(c)}
				\textcolor{CBF-pink}{e_0} & 0 \\
				\textcolor{CBF-pink}{e_1} & 1 \\
				\textcolor{CBF-pink}{e_2} & 1 \\
				\textcolor{CBF-pink}{e_3} & 0 \\
				\textcolor{CBF-pink}{e_4} & 1 \\
				\textcolor{CBF-pink}{e_5} & 0 \\
			\end{block}
		\end{blockarray}
		\)
	\end{minipage}
	\caption{Non-oriented boundary matrices $\Bb_1$ and $\Bb_2$
		for an example simplicial complex.}
	\label{fig:boundary_matrices}
\end{figure}

\paragraph*{Simplicial Maps}A simplicial map defines a function from one simplicial
complex $K_a$ to another $K_b$, specifically by a map from
the vertices of $K_a$ to the vertices of $K_b$ \cite{munkres}.
This map $f: V(K_a) \rightarrow V(K_b)$ must satisfy the following: $\forall \sigma\in K_a, f(\sigma)\in K_b$. 
This way, all images of the vertices of a simplex span a
simplex.

\paragraph*{Adjacency}\label{sec:adjacency}
A useful way to represent a graph is by the corresponding
adjacency matrix: a square matrix that stores information
about which vertices are connected by edges.
However, while simplicial complex structure is also captured
by adjacency relations, there is more than one way to
generalize this idea to higher dimensional simplices.
Specifically, $p$-simplices can relate to each other by
either their common upper or lower dimensional neighbors.
In order to fully capture the face and coface relations in a
simplicial complex, we consider four different adjacency types:
boundary adjacent,
coboundary adjacent,
upper adjacent, and
lower adjacent \cite{bodnar2021simplicial},
which we define next and are shown for an example simplex in
Fig.~\ref{fig:neighbors}.
\begin{figure}
	\centering
	\includegraphics[width=.4\textwidth]{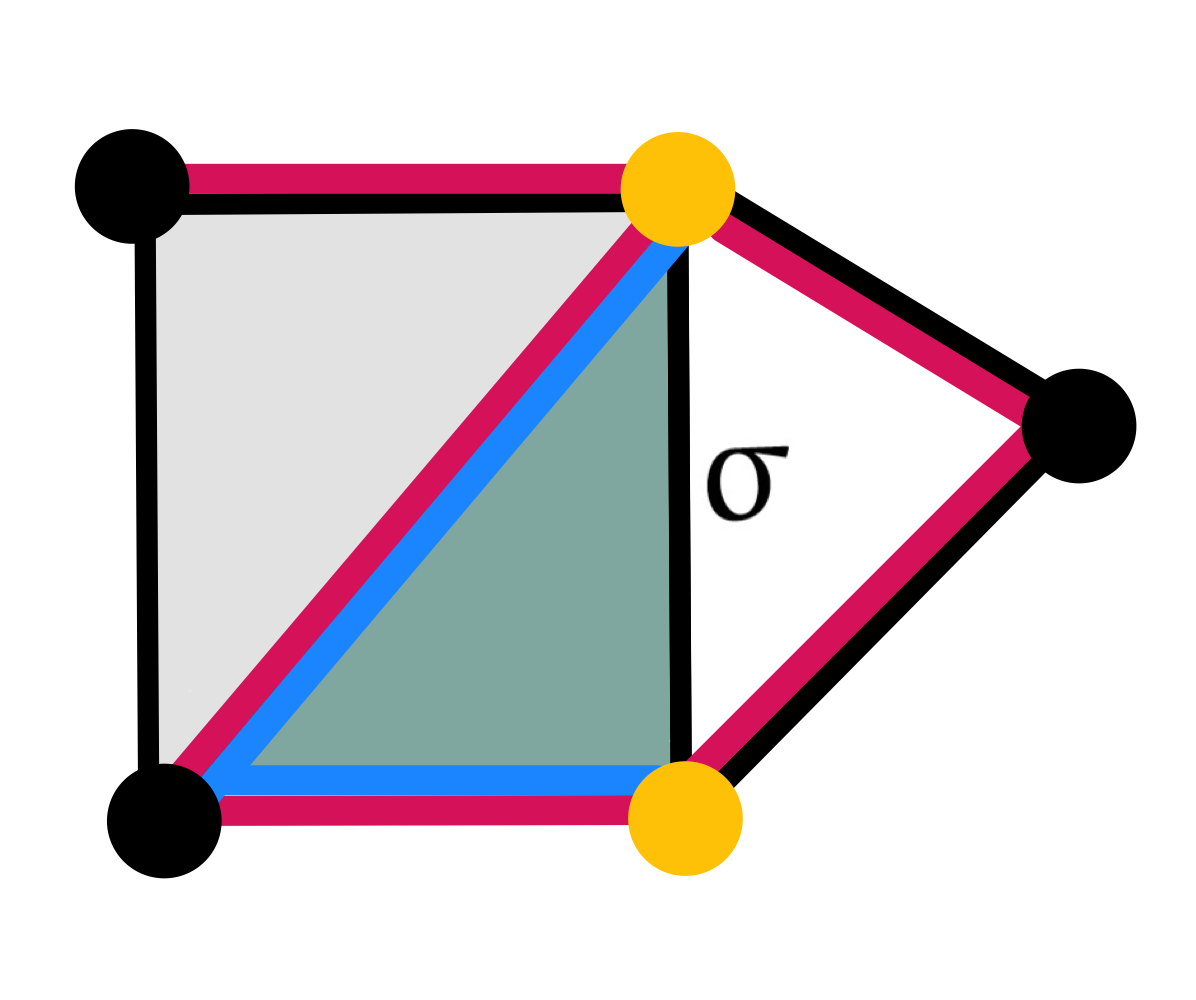}
	\caption{For simplex $\sigma$, geometrically realized as an
		edge, its four different types of adjacent simplices are:
		\textcolor{CBF-orange}{boundary adjacent},
		\textcolor{CBF-green}{coboundary adjacent},
		\textcolor{CBF-pink}{lower adjacent}, and
		\textcolor{CBF-blue}{upper adjacent}.}
	\label{fig:neighbors}
\end{figure}

Fix a $p$-simplex $\sigma_p$.
The \textit{boundary adjacent} simplices are the set of
$(p-1)$-dimensional faces of the $p$-simplex.
This boundary adjacent relation directly corresponds to the
standard boundary map for simplicial complexes: $\Bb_p$.
For the same simplex, the set of $(p+1)$-dimensional
simplices which have $\sigma_p$ as a face are its
\textit{coboundary adjacent} simplices.
The coboundary adjacency relation corresponds to the
transpose of the standard boundary map: $\Bb_{p+1}^T$.

The usual notion of adjacency on a graph corresponds to the
set of vertices which share an edge; i.e.~a higher
dimensional simplex with each as a face.
More generally, for simplicial complexes, $p$-simplices are
considered \textit{upper adjacent} if there exists a
$(p+1)$-simplex that they are both faces of.
We can capture upper adjacent neighbor relations in terms of
the complex's boundary maps (the \textit{up-down
	combinatorial Laplacian operator}):
\begin{equation}\label{eqn:Aup}
	\Ab_{up,p} = \Bb_{p+1}\Bb_{p+1}^T \, .
\end{equation}
The \textit{lower adjacent} neighbors are $p$-simplices such
that there exists a $(p-1)$-simplex that is a face of both
(i.e.~both $p$-simplices are cofaces of a common $(p-1)$-simplex).
The lower adjacent neighbor relations can also be written in
terms of the complex's boundary maps (the \textit{down-up
	combinatorial Laplacian operator}):
\begin{equation*}
	\Ab_{low,p} = \Bb_p^T\Bb_p \, .
\end{equation*}
Note that the sum of these two (upper and lower) adjacency
operators gives the \textit{$p$-dimensional Hodge Laplacian},
which is leveraged in different simplicial neural network
architectures to facilitate local information sharing on complexes.
Using the Hodge Laplacian as a diffusion operator, there are
various existing neural network architectures (e.g.
\cite{bodnar2021simplicial, ebli2020SNN}) that operate on
simplicial complexes.
Simplicial neural networks of this type are the context in
which our pooling layer for simplicial complexes,
\nervePool{}, can be leveraged.

 \section{Methods}
\label{sec:method}
In this section, we describe the proposed pooling layer
defined on the space of simplicial complexes for an input
simplicial complex and vertex clustering.
We define both a topologically motivated framework for
\nervePool{} (Section \ref{sec:TopMotivtion}) and the
equivalent matrix representation of this method (described in
Section \ref{sec:MatImplement}).
In Fig.~\ref{fig:example_equivalence}, we visually depict
both the topological and matrix formulations, and their
compatible output for an example simplicial complex and
vertex clustering.
\begin{figure}
	\centering
	\includegraphics[width=\textwidth]{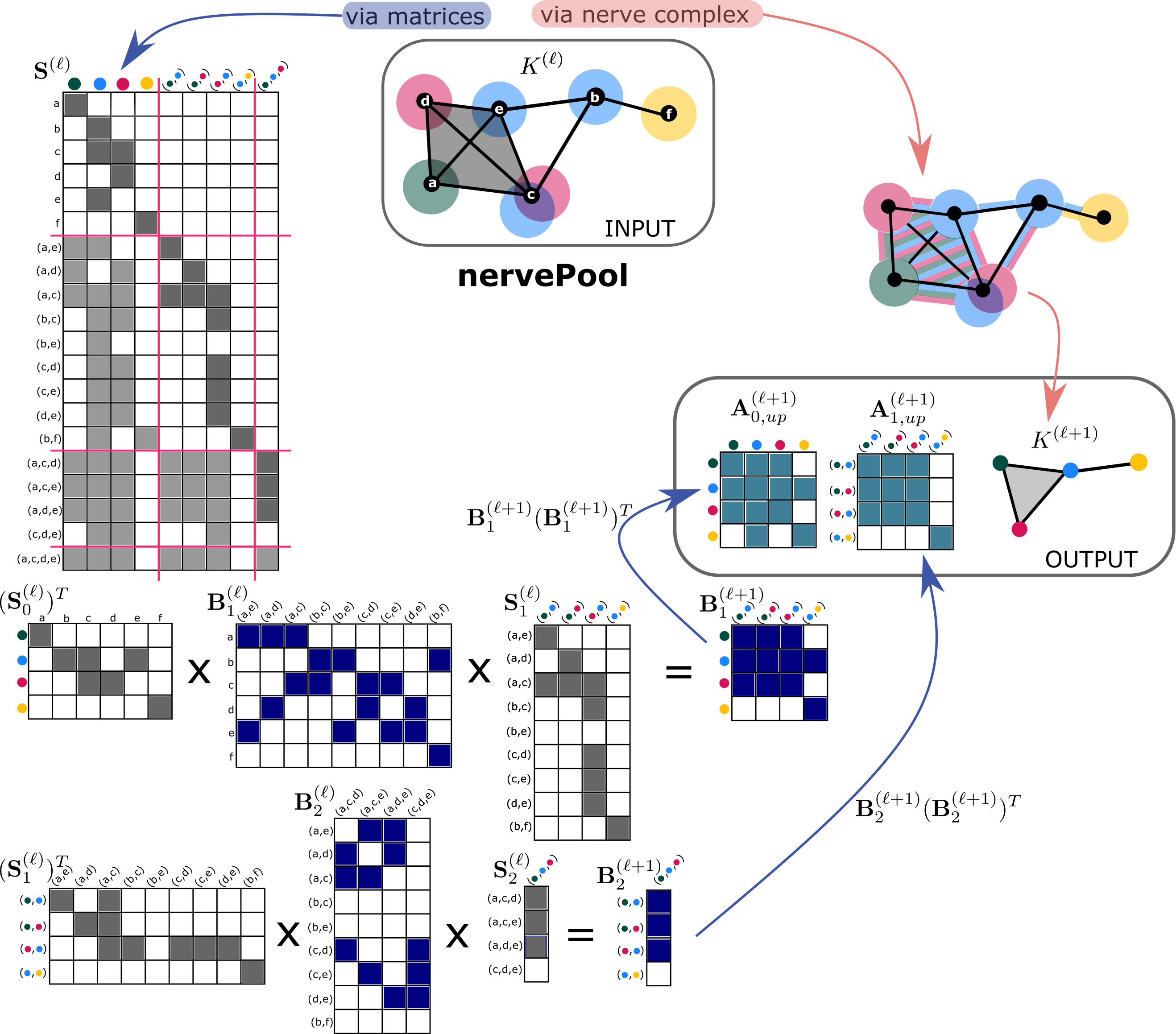}
	\caption{A visual representation of \nervePool{} for an
		example simplicial complex and choice of \textbf{soft
			partition of vertices}. Shaded-in matrix entries indicate
		non-zero values, and elsewhere are zero-valued entries.
The dark grey entries within the $\Sb^{(\ell)}$ matrix
		relate the $k$-dimensional simplices with the pooled 
    virtual simplices of the same dimension.
The output simplicial complex given by the matrix implementation is equivalent to
		the nerve complex output, up to weighting of $p$-simplices
		and the addition of ``self-loops", as indicated by non-zero
		entries on the diagonal of adjacency matrices.}
	\label{fig:example_equivalence}
\end{figure}
Notation and descriptions used for simplicial complexes are
outlined in Table \ref{Tab:notation}.
\begin{table}
	\centering
	\begin{tabular}{llp{0.65\linewidth}}
		\toprule
		{\small\sc Notation}                           & {\small\sc Description}                     \\
		\midrule
		${K}^{(\ell)}$                                 & A simplicial complex at layer $\ell$        \\
		\midrule
		$n_p^{(\ell)}$                                 & Number of $p$-dim simplices in $K^{(\ell)}$ \\
		\midrule
		$\mathcal{P}^{(\ell)}$                         & $\text{dim}\left(K^{(\ell)}\right)$         \\
		\midrule
		$\mathcal{N}^{(\ell)} =
		\sum_{p=0}^{\mathcal{P}^{(\ell)}}n_p^{(\ell)}$ & Total
		number of simplices for $K^{(\ell)}$                                                         \\
		\midrule
		$d_p^{(\ell)}$                                 & Number of features on $p$-dim simplices     \\
		\bottomrule
	\end{tabular}
	\vspace{1em}
	\caption{Simplicial complex notation and descriptions}
	\label{Tab:notation}
\end{table}

The input to \nervePool{} is a simplicial complex and a
learned partition of the vertices (i.e.~a specified cover on
the vertex set).
For exposition purposes, we assume that the initial clusters
can form a soft partition of the vertex set, meaning every
vertex is assigned to at least one cluster but vertices can
have membership in more than one cluster.
However, in Section \ref{sec:properties}, theoretical proof
of some properties require restriction to the setting of a
hard partition of the vertex set.
The initial vertex clusters give us a natural way to coarsen
the underlying graph ($1$-skeleton) of the input simplicial
complex, where clusters of vertices in the original complex
are represented by meta-vertices in the pooled complex.
However, in order to collapse both structural and attributed
information for higher-dimensional simplices, we require a
method to extend the clusters.
\nervePool{} provides a mechanism in which to naturally
extend graph pooling methods to apply on higher-dimensional simplices.

\subsection{Topological Formulation}
\label{sec:TopMotivtion}
Before defining the matrix implementation of simplicial
pooling, we will first describe the topological formulation
using the nerve complex.
\nervePool{} follows an intuitive process built on learned
vertex cluster assignments which is extended to higher
dimensional simplices in a deterministic fashion.

From the input cover on just the vertex set, however, we lack
cluster assignments for the higher dimensional simplices.
To define a coarsening scheme on the entire complex, we must
extend the cover such that each of the simplices is included
in at least one cluster using a notion of local neighborhoods
around each simplex.
A standard way to define local neighborhoods in a simplicial
complex is the star of a simplex, which can be defined on
simplices of any dimension.
For \nervePool{}, we only require the star defined on
vertices, $St(v) = \{\sigma_p\in K^{(\ell)} \mid v \subseteq
	\sigma_p \}$, which is the set of all simplices in
$K^{(\ell)}$ such that $v$ is a vertex of $\sigma_p$.
Using this construction, for each given vertex cluster
$U_i=\{v_0,\hdots,v_N\}$, we extend it to the union of the
stars of all vertices in the cluster,
\begin{equation*}
	\widetilde{U}_i = \bigcup_{v\in U_i}St(v)\, .
\end{equation*}
The resulting cover of the complex, $\mathcal{U} =
	\{\widetilde{U}_i\}$, is such that all simplices in
$K^{(\ell)}$ are part of at least (but often more than) one
cover element $\widetilde{U}_i$.
Figure \ref{fig:star_of_simplex} shows an example of a
cluster consisting of three vertices and the simplices
$\widetilde{U}_i$ which contribute to $\sigma \in K^{(\ell+1)}$.
\begin{figure}
	\centering
	\includegraphics[width=.6\linewidth]{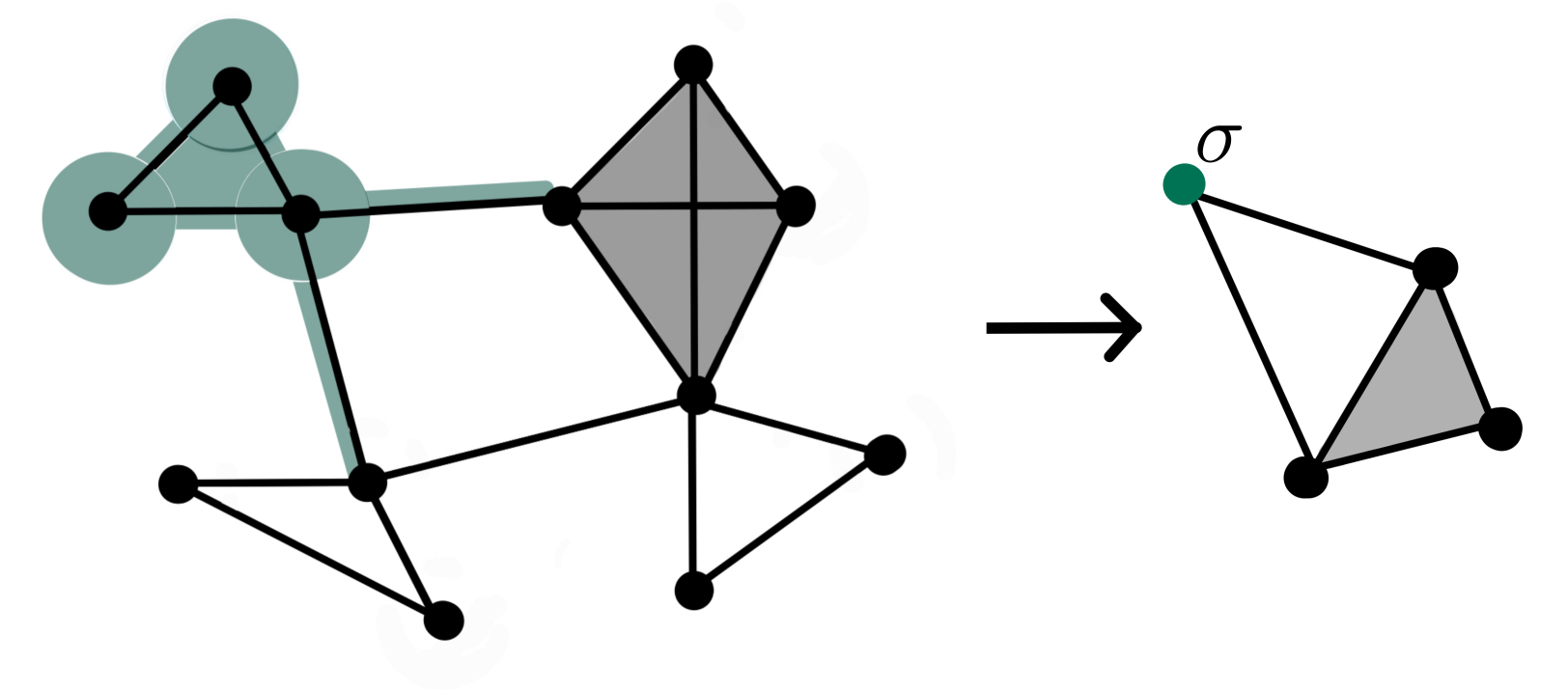}
	\caption{The three green highlighted vertices come from a
		single cover element $U_i$ of a given example cover; the
		full cover example will be continued in
		Fig.~\ref{fig:sx_pool_diagram}.
The extended cover $\widetilde{U_i}$ for this collection
		of vertices, defined by the union of the star of every
		vertex in the cluster, is shown by the highlighted green simplices.
These simplices in $K^{(\ell)}$ all contribute information
		to the meta vertex $\sigma \in K^{(\ell+1)}$.}
	\label{fig:star_of_simplex}
\end{figure}
Note that neither $St(v)$ nor $\widetilde{U}_i$ are
simplicial complexes, because they are not closed under the
face relation.

We use this cover of the complex to construct a new complex,
using the nerve. Given any cover $\mathcal{A} = \{A_i\}_{i\in
	I}$, the nerve is defined to be the simplicial complex given by
\begin{equation*}
\mathrm{Nrv}(\mathcal{A}) = \Bigl{\{} \sigma \subseteq I \;
	| \; \bigcap_{i\in\sigma} A_i \neq \emptyset \Bigl{\}} \, ,
\end{equation*}
where each distinct cover element $A_i$ is represented as a
vertex in the nerve.
If there is at least one simplex in two distinct cover
elements $A_i$ and $A_j$, we add corresponding edge
$(A_i,A_j)$ in the nerve complex.
Similarly, for higher dimensions if there exists a $p$-way
intersection of distinct cover elements, a
$(p-1)$-dimensional simplex is added to the nerve complex.
This general nerve construction gives us a tool to take the
extended cover of the simplicial complex $\mathcal{U} =
	\{\widetilde{U}_i\}$ and define a new, pooled simplicial complex:
\begin{equation*}
	K^{(\ell+1)} := \mathrm{Nrv}(\mathcal{U})\,.
\end{equation*}
Figure \ref{fig:sx_pool_diagram} shows the nerve of a
$3$-dimensional simplicial complex using four predetermined
clusters of the vertices.

\begin{figure}
	\centering
	\includegraphics[width=.8\linewidth]{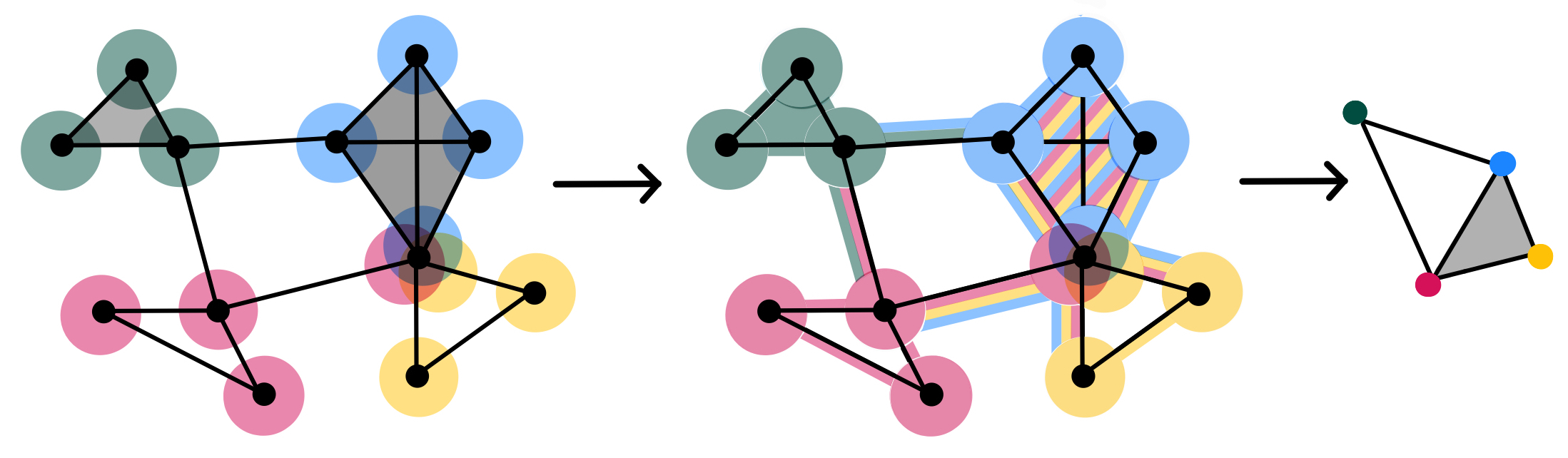}
	\caption{Illustration of \nervePool{} on an example $3$-dim
		simplicial complex, applied to coarsen the complex into a
		$2$-dimensional simplicial complex. The left-most complex
		is the input simplicial complex, with vertex cluster
		membership indicated by color. The center complex depicts
		the extended clusters $\mathcal{U} = \{\widetilde{U}_i\}$
		and the right-most complex is the pooled simplicial
		complex, determined by $\Nrv(\mathcal{U})$.}
	\label{fig:sx_pool_diagram}
\end{figure}
A notable feature of the extended covers that facilitate
\nervePool{} is that they are based on the learned vertex
cluster assignments and as such, cover elements are not
necessarily contractible.
Vertex clusters that are spatially separated on the complex
can result in cover elements that are non-convex.
Thus, since open cover elements are not guaranteed to be
convex, we cannot guarantee that the simplicial complex
$K^{(\ell+1)} = \Nrv(\mathcal{U})$ is
homotopy-equivalent to the original complex $K^{(\ell)}$ as
would be needed to apply the nerve lemma \cite{weil1952theoremes}.

\subsection{Matrix Implementation}
\label{sec:MatImplement}

In practice, we apply the simplicial pooling method described
above through matrix operations on constructed cluster
assignment matrices, boundary matrices, and simplex feature matrices.
Matrix notation used to define simplicial pooling is outlined
in Table \ref{Tab:matrices}.
See also  Table \ref{Tab:notation} for simplicial complex notation.
\begin{table}
	\centering
	{
		\renewcommand{\arraystretch}{1.2}\begin{tabular}{llp{0.6\linewidth}}
			\toprule
			{\small\sc Notation}
			                        & {\small\sc Dimension}
			                        & {\small\sc Purpose}                     \\
			\midrule

			$\Ab_p^{(\ell)}$        & $n_p^{(\ell)} \times n_p^{(\ell)}$    &
			Adjacency matrix for $p$-dim simplices at layer $\ell$            \\
\midrule
			$\Bb_p^{(\ell)}$        & $n_{p-1}^{(\ell)} \times
			n_{p}^{(\ell)}$         & $p$-dimensional boundary matrix. Maps
			features on $p$-dim simplices to the space of $(p
			-1)$-dim simplices                                                \\
			\midrule
			$\Xb_p^{(\ell)}$        & $n_p^{(\ell)} \times d_p^{(\ell)}$    &
			Features on $p$-dim simplices at layer $\ell$                     \\
\midrule
			$\Sb^{(\ell)}$          & $\mathcal{N}^{(\ell)} \times
			\mathcal{N}^{(\ell+1)}$ & Block matrix for pooling
			simplices at layer $\ell$                                         \\
			\midrule
			$\Sb_{q,p}^{(\ell)}$    & $n_q^{(\ell)} \times
			n_p^{(\ell+1)}$         & Sub-block of $\Sb^{(\ell)}$ matrix
			which maps $q$-simplices in $K^{(\ell)}$ to $p$-simplices
			in the pooled complex $K^{(\ell+1)}$. If $q=p$, we write
			$\Sb_q^{(\ell)}$                                                  \\
			\toprule
		\end{tabular}}
	\vspace{1em}
	\caption{Matrix notation with dimensions}
	\label{Tab:matrices}
\end{table}
 For this paper, we assume that the input boundary matrices
that represent each simplicial complex are non-oriented,
meaning the matrix values are all non-negative.
Since the nerve complex is an inherently non-oriented object
(built using intersections of sets of simplices) and
\nervePool{} is formulated based on a nerve construction, it
is necessary that the boundary matrices we use to represent
each simplicial complex similarly do not take orientation into account;
we use $\left|\Bb_p\right|$ for \nervePool{} pooling operations.
However, if the other layers of the neural network that the
pooling layer is used within require orientation of
simplices, it is sufficient to fix an arbitrary orientation
(so long as it is consistent between dimensions of the complex).
For example, using an orientation equivariant message passing
simplicial neural network (MPSN) layer, we can fix an
orientation of simplices after applying \nervePool{} without
affecting full network invariance with respect to orientation
transformations \cite{bodnar2021simplicial}.
\paragraph*{Extend vertex cluster assignments}
From the input vertex cover, we represent cluster membership
of each $0$-simplex (vertex) in an assignment matrix $\Sb_0^{(\ell)}$.
This $n_0^{(\ell)} \times n_0^{(\ell+1)}$ matrix keeps track
of the vertex clusters with entries
$\Sb_0^{(\ell)}[\sigma_0^{(\ell)},U_i] >0$ if the vertex
$\sigma_0^{(\ell)}$ is in cluster $U_i$ and zero entries elsewhere.
Each row of $\Sb_0^{(\ell)}$ corresponds to a vertex in the
original complex, and each column corresponds to a cover
element (or cluster), represented as a meta-vertex in the
next layer $\ell+1$.
Using these initial clusters which consist of vertices only,
we then extend the cluster assignments to include all of the
higher dimensional simplices of the complex in a deterministic way.
From the original matrix
$\Sb_0^{(\ell)} :n_0^{(\ell)}\times n_0^{(\ell+1)}$,
we extend the pooling to all simplices resulting in the full
matrix, $\Sb^{(\ell)}:\mathcal{N}^{(\ell)}\times
	\mathcal{N}^{(\ell+1)},$
where each of its sub-blocks
$\Sb_{q,p}^{(\ell)}:n_q^{(\ell)}\times n_p^{(\ell+1)}$
map $q$-simplices in $K^{(\ell)}$ to $p$-simplices in the
coarsened complex $K^{(\ell+1)}$.
Figure \ref{fig:Smatrix} shows a visualization of this matrix
extension from vertex cluster assignments to all dimension
simplex cluster assignments.
\begin{figure}
	\centering
	\includegraphics[width=.5\textwidth]{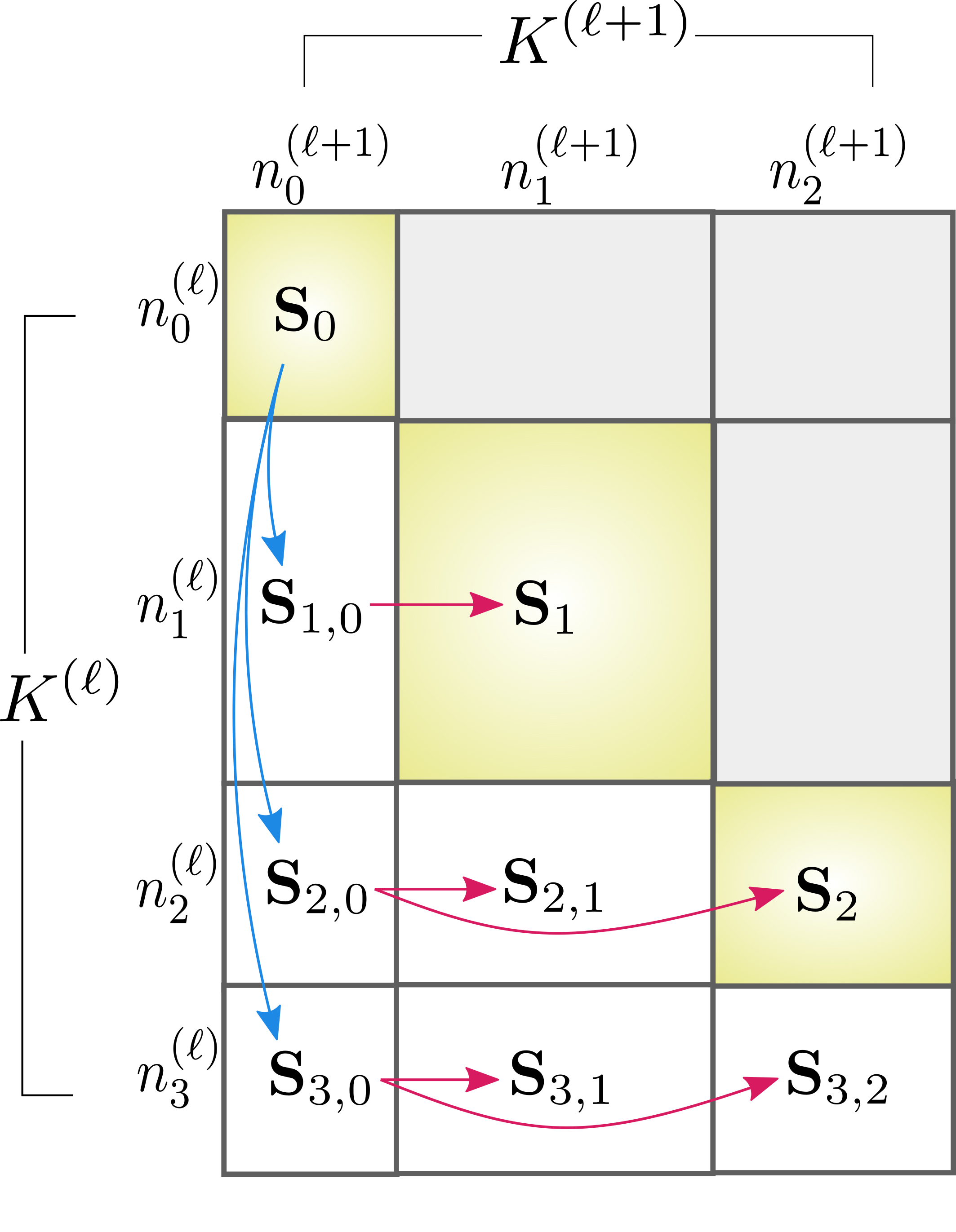}
	\caption{Visual depiction of $\Sb^{(\ell)}:
			\mathcal{N}^{(\ell)} \times \mathcal{N}^{(\ell+1)}$ block
		matrix. Sets of sub-blocks are used to map simplices of
		original simplicial complex to the pooled simplices.
}
	\label{fig:Smatrix}
\end{figure}
As visualized in this diagram, there are two directions by
which information is cascaded through the matrix
$\Sb^{(\ell)}$: extend \textit{down} and extend
\textit{right}, both of which pass information to simplices
of the next higher dimension.
We pass pooling information to simplices of the next higher
dimension, within the original simplicial complex via the
\textit{down arrow update},  $\Sb_{0}
	\textcolor{CBF-blue}{\rightsquigarrow} \,\Sb_{q,0}, $ for $
	q>0$ defined,
\begin{equation}
	\label{eqn:downfunction}
	\Sb_{q,0}[\sigma_q,v_j] :=
	\begin{cases}
		1 & \text{if } \Sb_0[v_m,v_j] = 1 \text{ for some } v_m\in
		\sigma_q                                                  \\
		0 & \text{ otherwise.}
	\end{cases}
\end{equation}
Pooling information is passed to simplices of higher
dimensions for the new simplicial complex via the
\textit{right arrow update}, $\Sb_{q,0}
	\textcolor{CBF-pink}{\rightsquigarrow}\, \Sb_{q,p}$, for $0<p\leq q$
\begin{equation} \label{eqn:rightfunction}
	\Sb_{q,p}[\cdot ,\sigma_{p}] := \Sb_{q,0}[\cdot, v_a] \odot
	\Sb_{q,0}[\cdot, v_b]\odot \cdots \odot \Sb_{q,0}[\cdot, v_c] \, ,
\end{equation}
for $\sigma_{p}$ the simplex given by $(v_a,v_b,\hdots,v_c)$
and where $\odot$ indicates the pointwise multiplication of
matrix columns.
If $\Sb_{q,p}[\cdot, \sigma_{p}]= \bm{0}$ (the zero vector)
then $\sigma_{p} \notin K^{(\ell+1)}$ and that column is not
included in $\Sb_{q,p}$.

\paragraph*{Pool with full cluster assignment matrix}
\revision{
Sub-blocks of the pooling matrix correspond to mapping
$p$-dimensional simplices in $K^{(\ell)}$ to pooled
$p$-simplices in $K^{(\ell+1)}$.
In particular, the $p$-dimensional boundary matrices of the
pooled simplicial complex are computed using
\begin{equation*}
	\Bb_p^{(\ell +1)} = \left(\Sb_{p-1}^{(\ell)}\right)^T
	\Bb_p^{(\ell)} \Sb_{p}^{(\ell)}.
\end{equation*}
In principle, using the full cluster assignment matrix allows 
the interaction of the lower dimensionsal simplices with the higher 
order ones and vice versa. 
By setting the sub- and super diagonal blocks to zero, we can thus 
restrict \nervePool{} to only allow interaction between the simplices 
of the same dimension.
In this work, we show that this restriction is sufficient to preserve the topological structure of the simplicial complex.
However, since the full cluster assignment matrix is available, it is possible to explore the use of the full matrix which we utilize in experiments in Section \ref{sec:Experiments}. 
}

The pooled simplicial complex is entirely determined by
the upper adjacency matrices (Equation \ref{eqn:Aup}),
$\Ab_{up,p}^{(\ell +1)} = \Bb_{p+1}^{(\ell
		+1)}\left(\Bb_{p+1}^{(\ell +1)}\right)^T$, either on their
own or normalized using the absolute difference with the
degree matrix $\Db_p = \sum_{j=0}^{n_p}B_{p+1}$,
\begin{equation*}
	\Ab_{up,p}^{(\ell +1)} = \left|\, \Db_p - \Bb_{p+1}^{(\ell
		+1)}\left(\Bb_{p+1}^{(\ell +1)}\right)^T \right| \,.
\end{equation*}
The set of pooled boundary matrices, or alternatively the set
of upper adjacency matrices, give all of the structural
information necessary to define a simplicial complex.
Note that the specific adjacency representation is a choice,
and one could use alternative notions of adjacency described
in Section \ref{sec:SimplicialComplexes} to summarize the
pooled simplicial complex.
Simplex embeddings are also aggregated according to the
cluster assignments in $\Sb^{(\ell)}$ to generate coarsened
feature matrices for simplices in layer $\ell +1$.
We compute these pooled embeddings for features on
$p$-simplices by setting
\begin{equation}\label{eqn:poolingX}
	\Xb_p^{(\ell +1)} = (\Sb_p^{(\ell)})^T \Xb_p^{(\ell)} \, ,
\end{equation}
where $\Xb_p^{(\ell)}$ are the features on $p$-simplices.
The coarsened boundary matrices and coarsened feature
matrices for each dimension can then be used as input to
subsequent layers in the network.

The matrix implementation of \nervePool{}, using cluster
assignment matrices multiplied against boundary matrices, is
consistent with the set-theoretical topological nerve
construction outlined in Section \ref{sec:TopMotivtion}.
Given the same complex and initial vertex cover, the output
pooled complex using the nerve construction is structurally
equivalent to the simplicial complex described by
$\Ab_{up,p}^{(\ell +1)}$.
In Theorem \ref{thm:topEquivMat} and associated proof, we will give a formal description of this equivalence.

\section{\nervePool{} Properties}
\label{sec:properties}

In this section, we describe observational and theoretical
properties of \nervePool{} as a learned simplicial complex
coarsening method, including a formally defined identity function.
In practice, most learned graph pooling methods output soft
partitions of the vertex set, meaning a vertex can be
included in multiple clusters.
However, extraneous non-zero entries in the ``boundary''
matrices\footnote{We use quotes around boundary because the
	matrices generated by \nervepool{} are not necessarily true boundary
	matrices. They are boundary matrices with possible extra
	non-zero entries which do not affect the output adjacency
	representations.} built by \nervepool{} emerge from vertices
assigned to multiple clusters.
While the result is a matrix that does not in fact represent
a simplicial complex, they do not seem to pose an issue in
practice, so long as the output simplicial complex is
represented by its adjacency matrices, and not the ``boundary'' matrices.
The unintended entries of ``boundary'' matrices do not affect
the adjacency matrix representation because if one is present
in the matrix, then there must also be non-zero entries
present for the edges that appear in the adjacency matrix.

As a result, in practice \nervePool{} can be used in a hard-
or soft-partition setting.
However to prove equivalence of the nerve construction to the
matrix implementation and simplex-permutation invariance, we
must assume a hard vertex clustering, i.e.~where each vertex
is a member of exactly one cluster.
In this regime, \nervePool{} has more theoretical
justification and we can derive the matrix implementation in
a way such that it produces the same pooled simplicial
complex output as the nerve construction, provided they are
given the same input complex and vertex cover.
Additionally, assuming a hard partition of the vertex set
allows us to prove it is a simplex permutation invariant pooling layer.
To work in this more restricted setting, given a
soft-partition input, one could simply threshold
appropriately to ensure that each vertex is assigned to a
single cluster, choosing the cluster it is included in with
the highest probability.
See the example of
Fig.~\ref{fig:example_equivalence_hardpartition} for the
resulting matrices in the case of a hard partition, which
uses the same complex as the soft partition example of
Fig.~\ref{fig:example_equivalence}.
In this restricted case, the boundary matrix output of
\nervePool{} does not have any extra non-zero entries that we
see in the case of soft partitions on the vertex set.

\begin{figure}
	\centering
	\includegraphics[width=\textwidth]{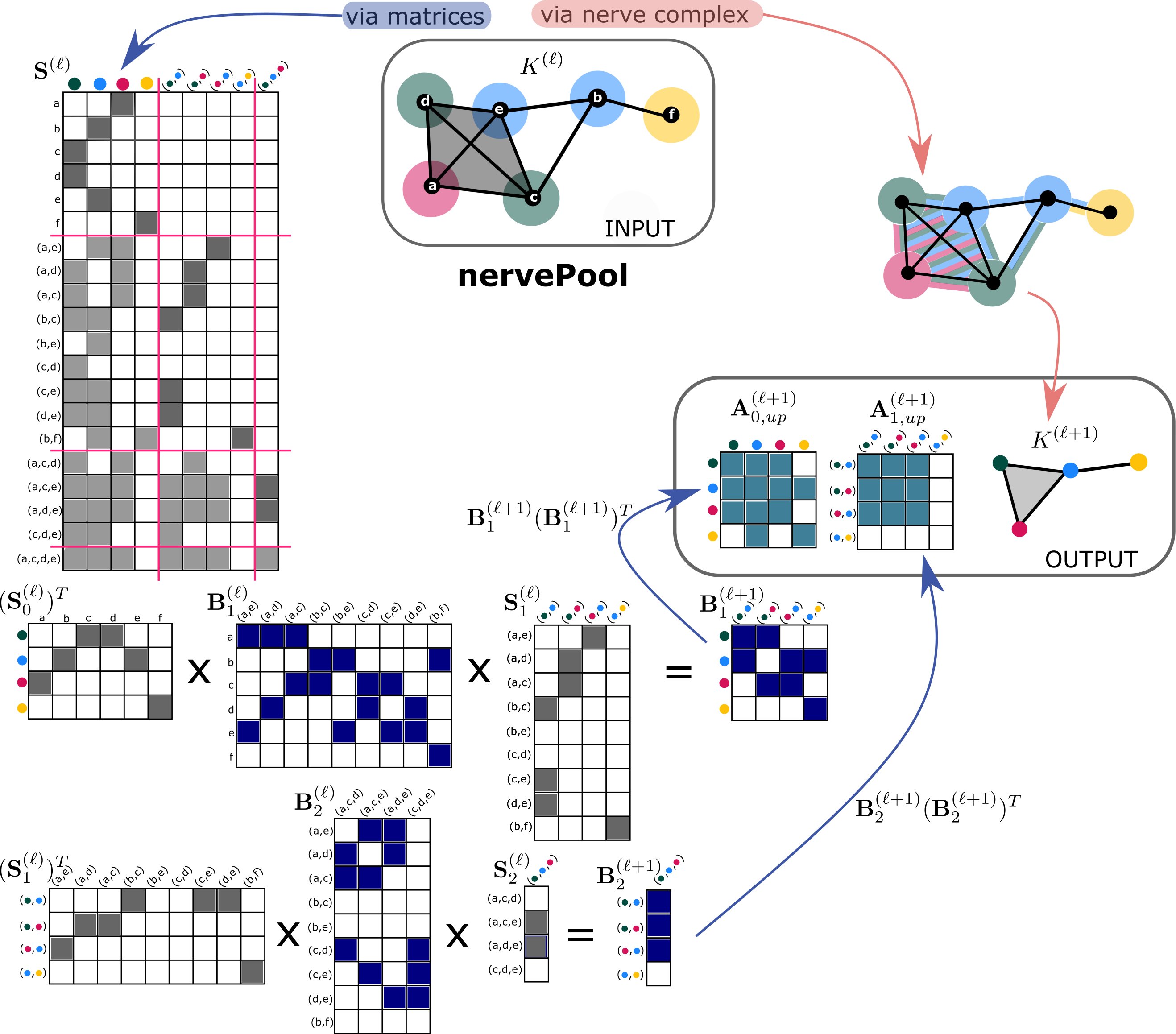}
	\caption{A visual representation of \nervePool{} for an
		example simplicial complex and choice of vertex clusters.
Note this is the same simplicial complex as
		Fig.~\ref{fig:example_equivalence}, but using a
		\textbf{hard partition of the vertex set}. Matrix entries
		indicate non-zero values, and elsewhere are zero-valued
		entries. 
}
	\label{fig:example_equivalence_hardpartition}
\end{figure}

We note one technical distinction between the standard graph
representation as an adjacency matrix, and the generalized
adjacency matrices we use here.
In the standard literature, the diagonal of the adjacency
matrix of a graph will be zero unless self loops are allowed.
However, the adjacency matrices constructed here have a
slightly different viewpoint.
In particular (at least in the hard partition setting), an
entry of an upper-adjacency matrix for dimension $p$
simplices is non-zero if the two simplices share a common
higher dimensional coface.
In the case of a graph, the $0$-dimensional upper-adjacency
matrix would then have non-zero entries on the diagonal since
a vertex that is adjacent to any edge is thus upper adjacent to itself.

Another important technical note is that with no assumptions
on the input clustering, we cannot make promises about
maintaining the topological structure of the simplicial
complex heading into the pooled layer.
Since the vertex cluster assignments are largely
task-dependent, any collection of vertices can be grouped
together in a cluster if it minimizes the loss function with
respect to a given task, with no regard for their spatial
locality on the simplicial complex.
This can lead to potentially un-intuitive behaviour in the
simplex pooling for those used to working in the setting of
the nerve lemma \cite{weil1952theoremes}, specifically
learned vertex groupings which are not localized clusters.
For example, Fig.~\ref{fig:homology_diff_fig} shows three
different initial vertex covers of same input simplicial
complex, resulting in three different \nervePool{} outcomes
of simplicial complexes with different homology.
\begin{figure}
	\centering
	\includegraphics[width=.5\textwidth]{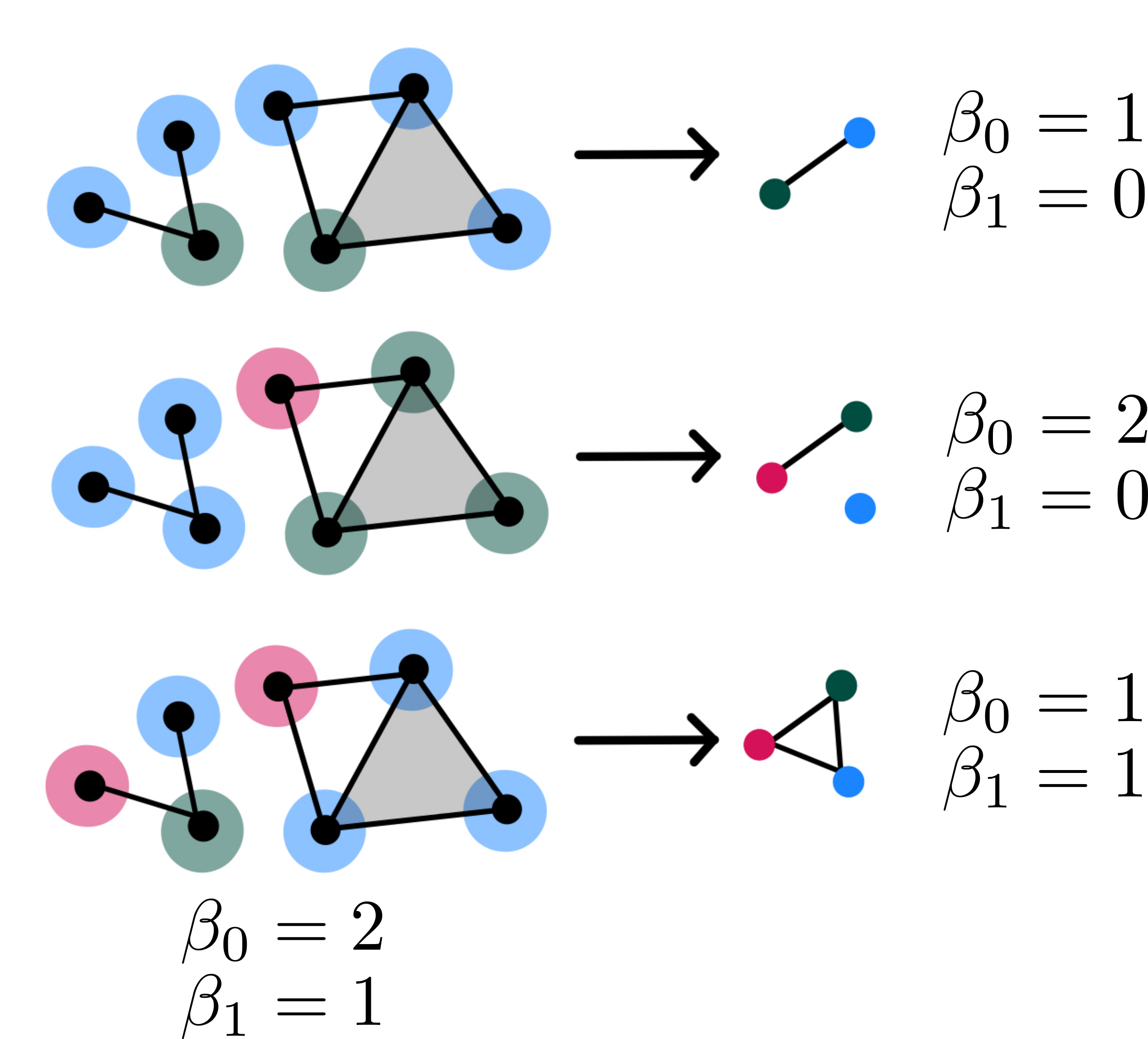}
	\caption{An input simplicial complex with three different
		choices of initial vertex covers (top, middle, bottom).
Each of these cover choices produce pooled simplicial
		complexes of different homology using \nervePool{}, as
		indicated by the Betti numbers for dimensions $0$ and $1$.
The first initial vertex cover (top) produces \nervePool{}
		output which changes Betti numbers in both dimensions. The
		second cover (middle) produces an output complex with the
		same $0$-dim Betti number. The third cover (bottom)
		produces an output complex with the same $1$-dim Betti number.}
	\label{fig:homology_diff_fig}
\end{figure}

In addition to interpretability of the diagonal entries, the
hard partition assumption gives further credence for
connecting the topological nerve/cover viewpoint with the
matrix implementation presented.
We prove this connection in the next theorem, which hinges on
the following construction.
Assume we are given the input cover of the vertex set
$\{U_i\}_{i=1}^{n_0^{(\ell+1)}}$, have constructed the cover
of the simplicial complex using the stars of the vertices
$\cU = \{\widetilde{U}_i \}_{i=1}^{n_0^{(\ell+1)}}$ and have
built the pooled simplicial complex using the nerve
construction $K^{(\ell+1)} = \Nrv(\cU)$.
There is a natural simplicial map from the simplicial complex at layer $\ell$ to that at layer $\ell+1$, $f:K^{(\ell)} \to K^{(\ell+1)}$
where a vertex $v \in K^{(\ell)}$ is mapped to
the vertex in $K^{(\ell+1)}$ representing the (unique due to the hard partition assumption) cover element containing it.
As with any simplicial map, simplices in $K^{(\ell)}$ may be
mapped to simplices in $K^{(\ell+1)}$ of a strictly lower dimension.
\revision{
In the theoretical results, we make the  assumption of using only the block diagonal of the
$\Sb^{(\ell)}$ matrix, which means that the matrix construction
version will essentially ignore these lower dimensional maps,
instead focusing on the portions of the simplicial map that
maintain dimension.
That being said, one could make some different choices in the
pipeline to maintain these simplices.
The trade-off is that 
additional computation time is required, but now message passing can flow across cells of different dimensions potentially learning to better learning outcomes.
}

\begin{theorem}[Equivalence of Topological and Matrix
		Formulations]\label{thm:topEquivMat}
	Given the same input simplicial complex and hard partition
	of the vertex set (cluster assignments), the topological
	nerve/cover viewpoint and matrix implementation of
	\nervePool{} produce the same pooled simplicial complex.
\end{theorem}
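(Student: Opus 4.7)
The plan is to establish, simplex by simplex, a bijection between the $p$-simplices of the pooled complex produced by the nerve construction, $K_{\mathrm{nrv}}^{(\ell+1)} = \mathrm{Nrv}(\mathcal{U})$, and those of the matrix implementation output $K_{\mathrm{mat}}^{(\ell+1)}$, under the canonical labelling that identifies the meta-vertex coming from cover element $U_i$ with the $i$-th column of $\Sb_0^{(\ell)}$. With this labelling, a candidate $p$-simplex has the form $\sigma_p = (U_{i_0},\dots,U_{i_p})$, and by definition $\sigma_p \in K_{\mathrm{mat}}^{(\ell+1)}$ iff its column in the diagonal block $\Sb_p^{(\ell)} = \Sb_{p,p}^{(\ell)}$ is nonzero. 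So it suffices to show that this nonzeroness is equivalent to the nerve condition $\bigcap_{j=0}^p \widetilde{U}_{i_j} \neq \emptyset$.

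For the matrix side, I would combine the down-arrow update (\ref{eqn:downfunction}) and the right-arrow update (\ref{eqn:rightfunction}): the entry $\Sb_{p,p}^{(\ell)}[\tau_p,\sigma_p]$ is the pointwise product $\prod_{j=0}^p \Sb_{p,0}^{(\ell)}[\tau_p, U_{i_j}]$, which equals $1$ iff for every $j$ some vertex of the $p$-simplex $\tau_p \in K^{(\ell)}$ lies in $U_{i_j}$. Hence the $\sigma_p$-column is nonzero iff there exists a $p$-simplex $\tau_p \in K^{(\ell)}$ meeting each cluster $U_{i_j}$. For the nerve side, unfolding $\widetilde{U}_{i_j} = \bigcup_{v \in U_{i_j}} \mathrm{St}(v)$ shows that $\tau \in \bigcap_j \widetilde{U}_{i_j}$ iff the simplex $\tau \in K^{(\ell)}$ contains at least one vertex from every $U_{i_j}$.

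The main obstacle is reconciling these two conditions, since a priori the matrix witness must have dimension exactly $p$ while the nerve witness can be of any dimension; this is where the hard partition hypothesis is essential. The forward direction is immediate: a $p$-simplex meeting each $U_{i_j}$ already lies in every $\widetilde{U}_{i_j}$. For the reverse, given any $\tau \in \bigcap_j \widetilde{U}_{i_j}$, disjointness of the $U_{i_j}$ forces the $p+1$ witnessing vertices to be distinct, so $\dim(\tau) \geq p$; choosing one $v_j \in \tau \cap U_{i_j}$ for each $j$ produces a $p$-face $\tau_p = \{v_0,\dots,v_p\} \leq \tau$ lying in $K^{(\ell)}$ by closure under faces, supplying the required matrix witness. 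Under a soft partition this last step can fail, because a single vertex may witness several $\widetilde{U}_{i_j}$'s at once and no $p$-simplex need exist in the intersection. The resulting simplex-set bijection is automatically compatible with face relations, since it is defined purely by containment of vertex labels, so $K_{\mathrm{nrv}}^{(\ell+1)}$ and $K_{\mathrm{mat}}^{(\ell+1)}$ coincide as simplicial complexes, completing the proof.
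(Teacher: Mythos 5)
Your central argument is sound and, for the key step, is essentially the same as the paper's: you correctly reduce everything to the equivalence of ``some simplex of $K^{(\ell)}$ meets every cluster of $\tau_p$'' with the nerve condition $\bigcap_j \widetilde{U}_{i_j}\neq\emptyset$, and your use of the hard-partition hypothesis to promote an arbitrary-dimensional nerve witness $\tau$ to a $p$-dimensional face with exactly one vertex per cluster (which lies in $K^{(\ell)}$ by closure) is precisely the mechanism the paper relies on. You have also correctly identified why this step fails for soft partitions.

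The gap is in your final sentence. The matrix implementation does not output ``the set of nonzero columns of the diagonal blocks''; it outputs the boundary matrices $\Bb_p^{(\ell+1)}=\bigl(\Sb_{p-1}^{(\ell)}\bigr)^T\Bb_p^{(\ell)}\Sb_p^{(\ell)}$ (equivalently, the upper adjacency matrices built from them), and the pooled complex is read off from those. So after matching the simplex sets you still must show that this triple product is nonzero at $[\tau_{p-1},\tau_p]$ exactly when $\tau_{p-1}\leq\tau_p$ in $\Nrv(\cU)$; this is the entire second half of the paper's proof and it is not ``automatic from containment of vertex labels.'' Two things need checking. First, for a genuine face pair one must exhibit $\sigma_{p-1}\leq\sigma_p$ in $K^{(\ell)}$ with $\Sb_{p-1}^{(\ell)}[\sigma_{p-1},\tau_{p-1}]$, $\Bb_p^{(\ell)}[\sigma_{p-1},\sigma_p]$, and $\Sb_p^{(\ell)}[\sigma_p,\tau_p]$ all nonzero, and invoke non-negativity of all entries so the sum over such pairs cannot cancel. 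Second, for a non-face pair one must rule out spurious entries: any $\sigma_p$ with a nonzero entry in column $\tau_p$ has (by the hard partition) exactly one vertex in each cluster of $\tau_p$, so every facet $\sigma_{p-1}\leq\sigma_p$ misses whichever cluster of $\tau_{p-1}$ is not a cluster of $\tau_p$, forcing $\Sb_{p-1}^{(\ell)}[\sigma_{p-1},\tau_{p-1}]=0$. Both steps are short given what you have already established, but they are where the formula for $\Bb_p^{(\ell+1)}$ actually enters the argument, and your proposal never touches that formula.
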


\begin{proof}
	For the purposes of this proof, we assume that a hard partition was provided with all vertices given equal weight.
	This means, that we assume it was given in the form of a binary matrix $\Sb^{(\ell)}_0$ where $\Sb^{(\ell)}[v,j] = 1$ iff vertex $v \in K^{(\ell)}$ is in cluster $U_j$, and all rows have exactly one 1.
	A similar proof applies where we assume exacty one entry $\Sb^{(\ell)}[v,j] >0$ in any column, with the technicality that checking wether matrices are equal is replaced with checking whether matrices have non-zero entries (although potentially not equal) in the same locations.
	Since our theorem corresponds to the complex constructed rather than to the aggregation of the features, the binary assumption is enough for our purposes.

	The proof that these two methods result in the same simplicial complex is broken into two parts.
	\begin{romanenumerate}
		\item \textbf{The maps are the same.}
		We will show that the matrix representation $\mathbf{M}$
		of the natural simplicial map $f:K^{(\ell)}\rightarrow
			\Nrv(\cU) = K^{(\ell+1)}$ discussed above, is equal to
		the matrix $\Sb^{(\ell)}$ of simplex cluster assignments
		used for pooling in the matrix formulation.

		\item \textbf{The complexes (boundary matrices) have the same locations of non-zero entries.}
		Since the complexes are built based on the locations of non-zero entries in $\Sb$, we show that although they might not be equal, the boundary matrices
which represent the nerve of the cover $\cU$
		have non zero entries in the same locations as
		the boundary matrices
computed using the boundary and cluster assignment
		matrices from layer $\ell$.
That is, we will show
		\begin{equation*}
			\Bb_{\Nrv(\cU),p}[\tau_{p-1}, \tau_p] \neq 0 \qquad \text{iff} \qquad  \left(\Sb_{p-1}^{(\ell)}\right)^T
			\Bb_p^{(\ell)} \Sb_{p}^{(\ell)}[\tau_{p-1}, \tau_p] \neq 0
		\end{equation*}
		and thus represent the same complex which we can denote
		as $K^{(\ell+1)}$.
\end{romanenumerate}

	\textit{Proof of (i):}
	We prove equivalence of the simplicial map
	$f:K^{(\ell)}\rightarrow K^{(\ell+1)}$ to the matrix $\Sb^{(\ell)}$ of simplex cluster assignments by showing equivalent sparsity patterns of their matrices.
	The simplicial map is defined as
	$f(\sigma) = \{f(v)\}_{v\in\sigma}$, removing duplicates where they occur and viewing the result as a simplex in $K^{(\ell+1)}$.
	Note that $f(v)$ is some vertex $U_i$ in $K^{(\ell+1)}$ where $v\in U_i$.
	Let $\mathbf{M}$ be a matrix that represents this
	operator, defined by
	$\mathbf{M}[\sigma, \tau] = 1$
	iff $\tau \leq f(\sigma)$, i.e.~if $\tau$ is a face of $f(\sigma)$.

	Denote $\sigma = \{v_0,\cdots,v_q\}$ and $\tau = \{w_0,\cdots,w_p\}$, and then denote the clusters associated to the $w$'s as $\{U_0,\cdots,U_p\}$ respectively.
	Note that because $\tau \in K^{(\ell+1)}$, we know that $\bigcap_{j=0}^p \tilde U_j \neq \emptyset$.
	We now check that
	$\mathbf{M}[\sigma,\tau] = \Sb[\sigma,\tau]$
	for all pairs of $\sigma \in K^{(\ell)}$ and $\tau \in K^{(\ell+1)}$.
	By definition, $\mathbf{M}[\sigma, \tau] = 1$ iff for every $w_j \in \tau$ with associated cover $U_j$, there is a $v_i \in \sigma$ with $v_i \in U_j$.
	Then by Eqn.~\eqref{eqn:downfunction}, this occurs iff for any $w \in \tau$,
	$\Sb_{q,0}[\sigma, w] = 1$.
	Following Eqn.~\eqref{eqn:rightfunction} and because this is true for all $w \in \tau$, we have $\Sb_{q,p}[\sigma,\tau] = 1$.

	\textit{Proof of (ii):}
	Assume we are given a pair $\tau_{p-1} \leq \tau_p$,
	simplices in $\Nrv(\cU)$ of dimension $p-1$ and $p$ respectively, and thus
	\begin{equation*}
		\Bb_{\Nrv(\cU),p}[\tau_{p-1}, \tau_p]= 1.
	\end{equation*}
	We will show that if this occurs, the matrix
	multiplication results in a nonzero entry
	$\Bb_{p}^{(\ell+1)}[\tau_{p-1}, \tau_p]$ as well.
	After potentially reindexing, we can write $\tau_p = \{w_0,\cdots,w_p\}$ and $\tau_{p-1} = \{w_0,\cdots,w_{p-1}\}$.
	Similar to before, we denote the associated cover elements as $\{U_0, \cdots, U_p\}$.
	Since $\tau_p$ is in the nerve, we know that $\bigcap_{j=0}^p \tilde U_p \neq \emptyset$.
	Combining this with the definition of the star, this implies there is a simplex $\sigma_p = \{v_0,\cdots,v_p\} \in K^{(\ell)}$ where $v_j \in U_j$ for all $j$.
	Further, defining $\sigma_{p-1} = \{v_0,\cdots,v_{p-1}\}$, we have that $\sigma_{p-1} \in \bigcap_{j =0}^{p-1} U_j$.

	By construction $\sigma_{p-1}\leq \sigma_p$, so
	$\Bb_{p}^{(\ell)}[\sigma_{p-1}, \sigma_{p}] = 1$.
	Next, we check
	$\Sb_{p}^{(\ell)}[\sigma_p, \tau_{p}]$.
	In this case, because for every $w_i \in \tau_p$ there is $v_i \in \sigma_p$ with $v_i \in U_i$, we have
	$\Sb_p^{(\ell)}[\sigma_p, w_i] = 1$ for all $w_i$.
	And as this is true for all $w_i \in \tau_p$, it implies that $\Sb_{p}^{(\ell)} [\sigma_p, \tau_p] = 1$.
	Similar logic means that $\Sb_{p-1,p-1}^{(\ell)} [\sigma_{p-1}, \tau_{p-1}] = 1$.
	The resulting matrix multiplication
	\begin{equation*}
		\Bb_p^{(\ell +1)} = \left(\Sb_{p-1}^{(\ell)}\right)^T
		\Bb_p^{(\ell)} \Sb_{p}^{(\ell)}
	\end{equation*}
	will thus also have a non-zero entry in $[\tau_{p-1},\tau_p]$.

	Conversely, assume $\tau_{p-1}$ is not a face of
	$\tau_p$ and so
	\begin{equation*}
		\Bb_{\Nrv(\cU),p}[\tau_{p-1}, \tau_p]= 0.
	\end{equation*}
	We need to show $\Bb_p^{(\ell+1)}[\tau_{p-1},\tau_p]= 0$.
	Unfolding the matrix multiplication means this requires
	\begin{equation*}
		\sum_{\sigma_{p-1} \in K^{(\ell)}} \sum_{\sigma_p \in K^{(\ell)}}
		\left(\Sb_{p-1}^{(\ell)}\right)^T [\tau_{p-1}, \sigma_{p-1} ]
		\cdot
		\Bb_p^{(\ell)} [\sigma_{p-1}, \sigma_p]
		\cdot
		\Sb_{p}^{(\ell)}[ \sigma_p, \tau_p ]
	\end{equation*}
	to be 0, and thus at least one of the three entries must be zero for all pairs of $\sigma_{p-1}$ and $\sigma_p$.

	Fix $\sigma_{p-1}$ and $\sigma_p$, and note that the middle term is 0 if $\sigma_{p-1}$ is not a face of $\sigma_p$, so we can assume $\sigma_{p-1} \leq \sigma_p$.
	If  $\Sb_{p}^{(\ell)}[\sigma_p, \tau_{p}] =0$ we are done,
	so assume $\Sb_{p}^{(\ell)}[\sigma_p, \tau_{p}] \neq 0$.
By part (i), this means that $\tau_p \leq f(\sigma_p)$ and so by the dimensionality assumptions, we must have
	$\tau_p  = f(\sigma_p)$.
	Then each vertex of $ \sigma_p$ is in a distinct cover element represented by $\tau_p$.

	As $\tau_{p-1}$ is not a face of $\tau_p$, there is at least one vertex $w$ in $\tau_{p-1}$ which is not in $\tau_p$, and denote its cover element by $U$.
	Since every vertex of $\sigma_p$ corresponds to a distinct cover element which is in $\tau_p$, this implies that the vertices of $\sigma_p$, and thus those of $\sigma_{p-1}$ cannot be in $U$.
	This means that $\tau_{p-1}$ is not a face of $f(\sigma_{p-1})$.
	Using the interpretation from part (i), this means
	$\Sb_{p-1}^{(\ell)T}[\tau_{p-1}, \sigma_{p-1}] = 0$
	as required.
\end{proof}

\begin{theorem}[\nervePool{} identity function]
	There exists a choice of cover (cluster assignments) on the
	vertices, $\Sb_0^{(\ell)}$, such that \nervePool{} is the
	identity function, i.e.~$K^{(\ell+1)} = K^{(\ell)}$ and the
	pooled complex is equivalent to the original, up to
	re-weighting of simplices. In particular, this choice of
	$\Sb_0^{(\ell)}$ mapping is such that each distinct vertex
	in $K^{(\ell)}$ is assigned to a distinct vertex cluster in
	$K^{(\ell+1)}$.
\end{theorem}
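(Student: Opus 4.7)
The plan is to exhibit the natural candidate cover, namely the one in which every vertex forms its own singleton cluster, and then verify that both the topological and matrix formulations of \nervePool{} yield the original complex back. Concretely, set $\Sb_0^{(\ell)} = I_{n_0^{(\ell)}}$, so that $U_i = \{v_i\}$ for each $i$. This gives the cover $\cU = \{\St(v_i)\}_{i=1}^{n_0^{(\ell)}}$ of $K^{(\ell)}$, and the induced vertex map $f : V(K^{(\ell)}) \to V(K^{(\ell+1)})$ is a bijection.

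The topological half of the argument is to show that $\Nrv(\cU) \cong K^{(\ell)}$. A candidate simplex $\{U_{i_0},\dots,U_{i_p}\}$ lies in the nerve iff $\bigcap_{j=0}^p \St(v_{i_j}) \neq \emptyset$. By definition of the star, this intersection is non-empty iff there exists a simplex $\sigma \in K^{(\ell)}$ having each $v_{i_j}$ as a vertex, equivalently iff $\{v_{i_0},\dots,v_{i_p}\}$ itself spans a simplex of $K^{(\ell)}$. Thus the bijection on vertices extends to a simplicial isomorphism $\Nrv(\cU) \cong K^{(\ell)}$.

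To match the matrix formulation, I would compute the blocks of $\Sb^{(\ell)}$ directly. The down-arrow update (Eqn.~\ref{eqn:downfunction}) with $\Sb_0^{(\ell)}=I$ gives $\Sb_{q,0}^{(\ell)}[\sigma_q, v_j] = 1$ iff $v_j \in \sigma_q$, i.e., $\Sb_{q,0}^{(\ell)}$ is the vertex-incidence matrix of the $q$-simplices. The right-arrow update (Eqn.~\ref{eqn:rightfunction}) then gives $\Sb_{q,p}^{(\ell)}[\sigma_q, \tau_p] = 1$ iff every vertex of $\tau_p$ lies in $\sigma_q$, i.e., iff $\tau_p \leq \sigma_q$. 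In particular, on the diagonal block $\Sb_p^{(\ell)} = \Sb_{p,p}^{(\ell)}$, matching dimensions forces $\tau_p = \sigma_p$, so $\Sb_p^{(\ell)} = I_{n_p^{(\ell)}}$ for every $p$. Plugging this into the boundary formula yields
\begin{equation*}
\Bb_p^{(\ell+1)} = \bigl(\Sb_{p-1}^{(\ell)}\bigr)^T \Bb_p^{(\ell)} \Sb_p^{(\ell)} = \Bb_p^{(\ell)},
\end{equation*}
so all boundary matrices (and hence the full complex structure) are preserved. Alternatively, one may invoke Theorem~\ref{thm:topEquivMat} directly, since the hypothesis of a hard partition is satisfied.

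The only subtle point — and the reason for the caveat ``up to re-weighting of simplices'' — is that the pooled feature matrices $\Xb_p^{(\ell+1)} = (\Sb_p^{(\ell)})^T \Xb_p^{(\ell)} = \Xb_p^{(\ell)}$ are unchanged, but the derived upper-adjacency matrices $\Ab_{up,p}^{(\ell+1)} = \Bb_{p+1}^{(\ell+1)}\bigl(\Bb_{p+1}^{(\ell+1)}\bigr)^T$ still carry the non-zero ``self-loop'' diagonal entries discussed earlier in this section. I do not expect a genuine technical obstacle; the main care required is in keeping the bookkeeping of Equations~\ref{eqn:downfunction}--\ref{eqn:rightfunction} clean when verifying $\Sb_p^{(\ell)}=I$, and in flagging that the equivalence $K^{(\ell+1)} = K^{(\ell)}$ is as abstract simplicial complexes rather than as weighted adjacency representations.
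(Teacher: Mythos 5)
Your proof is correct, and its topological half is essentially the paper's own argument: singleton clusters give $\widetilde{U_i} = \St(v_i)$, and the nerve of these stars is isomorphic to $K^{(\ell)}$ because a collection of stars has nonempty common intersection iff the corresponding vertices span a simplex of $K^{(\ell)}$ (the forward direction via downward closure is exactly the paper's surjectivity step). The explicit matrix verification that $\Sb_{q,p}^{(\ell)}[\sigma_q,\tau_p]\neq 0$ iff $\tau_p\leq\sigma_q$, hence $\Sb_p^{(\ell)}=I$ and $\Bb_p^{(\ell+1)}=\Bb_p^{(\ell)}$, is a correct supplement that the paper omits (its proof is purely set-theoretic); the only minor quibble is that the ``up to re-weighting'' caveat is more naturally attributed to the row-normalization of $\Sb^{(\ell)}$, which rescales the diagonal blocks away from the exact identity, than to the self-loop entries of the derived adjacency matrices.
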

\begin{proof}
	For the proof, we show a bijection between the simplicial
	complexes using the set-theoretic representations, thus
	showing that the resulting layer is the same up to
	reweighting of the simplicies.
Suppose each vertex in the original complex is assigned to
	a distinct cluster, so that $U_i=\{v_i\}$, and then
	$\widetilde{U_{i}} = \St(v_i)$.
For any simplex $\sigma = \{v_{i_0},\cdots,v_{i_d}\}$ in
	$K^{(\ell)}$, we denote
	$\widetilde \sigma = \{\widetilde{U_{i_0}}, \cdots,
		\widetilde{U_{i_d}} \}$.
Note that $\widetilde \sigma$ must be a simplex in
	$K^{(\ell+1)}$ since $\sigma$ is in the intersection of the
	stars $\bigcap_{i = i_0}^{i_d} \St(v_i)$.

	Injectivity is immediate by definition.
We show the bijectivity of the set map $K^{(\ell)} \to
		K^{(\ell+1)}$ sending $\sigma \mapsto \widetilde \sigma$.
To check surjectivity, note that for any $\widetilde \sigma
		\in K^{(\ell + 1)}$ with $\{\widetilde{U_{i_0}}, \cdots,
		\widetilde{U_{i_d}} \}$, the intersection $\bigcap_{i =
			i_0}^{i_d} \St(v_i)$ must contain some simplex $\tau$.
By definition, this means that $v_i \leq \tau$ for all $i
		\in \{i_0,\cdots i_d\}$.
Then $\sigma = \{v_{i_0}, \cdots, v_{i_d} \} \leq \tau$,
	and by the closure of the simplicial complex, $\sigma \in K^{(\ell)}$.
\end{proof}

\subsection{Permutation invariance}
A key property of graph structured data and graph neural
networks is that the re-ordering of vertices should not
affect the network output.
Different representations of graphs by their adjacency
information are equivalent, up to arbitrary reordering of the vertices.
This property implies a permutation invariant network is
necessary for most tasks such as graph classification, and
subsequently permutation invariant pooling layers are necessary.
Other types of invariance, including orientation of
simplices, are not relevant for \nervePool{}, due to the
assumption of non-oriented simplices.

\begin{theorem}[\nervePool{} Permutation
		Invariance]\label{thm:permutationInv}
	\nervePool{} is a \textbf{permutation invariant} function
	on the input simplicial complex and partition of vertices.
Equivalently,
	\begin{equation*}
		\text{\nervePool{}}(P\Sb_0^{(\ell)}, K^{(\ell)}) =
		\text{\nervePool{}}(\Sb_0^{(\ell)}, K^{(\ell)})\,,
	\end{equation*}
	for any permutation matrix on vertices $P\in
		\{0,1\}^{n_0^{(\ell)}\times n_0^{(\ell)}}$.
\end{theorem}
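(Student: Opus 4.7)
The plan is to reduce the statement to the topological formulation of \nervePool{} and appeal to Theorem \ref{thm:topEquivMat}. The permutation matrix $P$ acts on the vertex set, so left-multiplication $P\Sb_0^{(\ell)}$ corresponds to a relabeling of the vertices of $K^{(\ell)}$; for the two sides to be comparable, the same permutation must be understood as acting consistently on $K^{(\ell)}$ (e.g.~on the rows of $\Bb_1^{(\ell)}$ and on how simplex subsets are indexed), so that each input refers to the same abstract simplicial complex viewed under a different vertex labeling.

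First I would observe that the column indexing of $\Sb_0^{(\ell)}$ (which enumerates the cover elements $U_1,\dots,U_{n_0^{(\ell+1)}}$) is unaffected by left-multiplication by $P$. Hence the pooled vertex set, indexed by $\{U_i\}$, is literally the same on both sides, and the problem reduces to showing that the set of simplices of the pooled complex is also the same. Next I would switch to the nerve viewpoint: each extended cover element $\widetilde{U}_i = \bigcup_{v\in U_i}\St(v)$ is a subset of the simplex set of $K^{(\ell)}$, and both of its defining conditions (``vertex $v$ lies in $U_i$'' and ``simplex $\sigma$ contains $v$'') are intrinsic properties of the vertex and simplex as abstract objects, independent of their labels. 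Consequently each $\widetilde{U}_i$ is unchanged as a set under the relabeling, so the cover $\mathcal{U}=\{\widetilde U_i\}$ is unchanged, and therefore so is $\Nrv(\mathcal{U})$.

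Finally I would invoke Theorem \ref{thm:topEquivMat} to transfer back: since the nerve output is the same and the hard-partition hypothesis of that theorem guarantees the matrix implementation coincides with the nerve construction, the boundary matrices $\Bb_p^{(\ell+1)}$ produced by \nervePool{} agree on both sides, and so the full output of \nervePool{} agrees.

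The main obstacle, and the step requiring the most care in the write-up, is making the interpretation of the theorem statement precise. Left-multiplying $\Sb_0^{(\ell)}$ by $P$ in isolation, while leaving $\Bb_1^{(\ell)}$ genuinely untouched, would in general yield a different pooled complex; the statement is only meaningful if $P$ is viewed as a permutation of the ambient (abstract) vertex set, which induces a consistent relabeling of all vertex-indexed rows and columns in the matrix representation of $K^{(\ell)}$. Once this convention is fixed, invariance follows essentially for free from the label-independence of the set-theoretic nerve construction, and the proof becomes a matter of carefully tracking that every step of the matrix implementation (the down update of Eqn.~\eqref{eqn:downfunction}, the right update of Eqn.~\eqref{eqn:rightfunction}, and the final products $(\Sb_{p-1}^{(\ell)})^T \Bb_p^{(\ell)} \Sb_p^{(\ell)}$) is defined by conditions that are intrinsic to the vertex-in-cluster and vertex-in-simplex relations rather than to specific labels.
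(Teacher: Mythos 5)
Your proof is correct and rests on the same underlying observation as the paper's --- that a vertex permutation is merely a relabeling of an abstract, set-theoretic construction --- but you take a genuinely different route to make it precise. The paper argues directly and rather informally: relabeled vertex sets give isomorphic simplex sets, the cluster assignments are ``permuted accordingly,'' hence the outputs agree; it never invokes the nerve formulation or Theorem~\ref{thm:topEquivMat}. You instead transfer the whole question to the topological formulation, observe that each extended cover element $\widetilde{U}_i = \bigcup_{v \in U_i}\St(v)$, and hence $\Nrv(\cU)$, is defined by label-independent membership relations, and then use Theorem~\ref{thm:topEquivMat} to carry the conclusion back to the boundary matrices. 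Your version buys two things: it makes explicit the interpretive point the paper glosses over --- that left-multiplying $\Sb_0^{(\ell)}$ by $P$ is only meaningful if $P$ simultaneously relabels the vertex-indexed rows of $\Bb_1^{(\ell)}$ and all higher simplices, since permuting $\Sb_0^{(\ell)}$ alone against a fixed $\Bb_1^{(\ell)}$ would genuinely change the output --- and it replaces the paper's appeal to ``permuted accordingly'' with a concrete chain of equalities through the nerve. The cost is that your argument inherits the hard-partition hypothesis from Theorem~\ref{thm:topEquivMat}; a direct check that the down update (Equation~\ref{eqn:downfunction}), the right update (Equation~\ref{eqn:rightfunction}), and the products $\left(\Sb_{p-1}^{(\ell)}\right)^T\Bb_p^{(\ell)}\Sb_p^{(\ell)}$ are each equivariant under consistent relabeling --- essentially what the paper gestures at --- would avoid that dependence, although the paper itself also states the result only in the hard-partition regime, so this is not a defect relative to the claim being proved.
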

\begin{proof}
	Permutations of simplices correspond to re-ordering of the
	vertex list of a complex. Note that the re-ordering of
	vertices of a simplicial complex induces a corresponding
	re-ordering of all its higher dimensional simplices.
Consider a simplicial complex defined by its set of
	simplicies $K:=\{\sigma_0,\hdots,\sigma_n\}$.
Consider an alternate labeling of all simplices in $K$ and
	call this new simplicial complex $K':=
		\{\sigma_0',\hdots,\sigma_n'\}$.
Then, there exists an isomorphism between the sets $K$ and
	$K'$, and specifically an isomorphism between the vertex sets of each.
This map between vertices in $K$ and vertices in $K'$ can
	be represented by a permutation matrix $P\in
		\{0,1\}^{n_0^{(\ell)}\times n_0^{(\ell)}}$.
Then, $ \text{\nervePool{}}(P\Sb_0^{(\ell)}, K^{(\ell)}) =
		\text{\nervePool{}}(\Sb_0^{(\ell)}, K^{(\ell)})$ since the
	simplex sets define the same simplicial complex structure
	and the vertex cluster assignments are permuted accordingly.
\end{proof}

\section{Training GNNs with \nervePool{}}
\label{sec:training}
The required input to \nervePool{} is a partition of the
vertices of a simplicial complex, learned using only the
$1$-skeleton (underlying graph) of the complex.
Since the learned aspect of \nervePool{} is limited to the
vertex cluster assignments ($\Sb_0^{(\ell)}$) and
higher-dimensional cluster assignments are a deterministic
function of $\Sb_0^{(\ell)}$,
the loss functions used for cluster assignments are defined
on only vertices and edges of the complex.
Any trainable graph pooling method that learns a soft
partition of the graph (DiffPool \cite{ying2018diffpool},
MinCutPool \cite{bianchi2020spectral}, StructPool
\cite{Yuan2020StructPool:}, etc) can be used in conjunction
with \nervePool{}.
However, there are minor adjustments necessary to modify
these graph pooling methods for application to \nervePool{}
simplicial complex coarsening (primarily in terms of
computing embeddings for multiple dimensions).

The broad categorization of graph pooling methods into
selection, reduction, and connection (SRC) computations
\cite{2021_SRCgraphpooling} provides a useful framework to
classify steps of \nervePool{} and necessary extensions for
the higher dimensional simplices.
The selection step refers to the method for grouping vertices
of the input graph into clusters, which can be used directly
in the case of \nervePool{} on the $1$-skeleton for the
initial vertex covers.
The reduction computation, which aggregates vertices into
meta-vertices (and correspondingly features on vertices
aggregated), must be generalized and computed for every
dimension of the simplicial complex.
In the case of \nervePool{}, this is done using the diagonal
sub-blocks of $\Sb^{(\ell)}$ to perform the reduce step at
each dimension.
Similarly, the connect step requires generalization to higher
dimensions and computation of new boundary maps to output the
pooled simplicial complex.
Table \ref{Tab:SRC_table} summarizes trainable graph pooling
methods, modified for use within \nervePool{} simplicial
complex pooling as the learned-vertex clustering input method.
\begin{table}
	\centering
	\resizebox{\columnwidth}{!}{
		\renewcommand{\arraystretch}{1.2}\begin{tabular}{l@{\hskip .5em}l@{\hskip .5em}l@{\hskip .5em}l}
			\toprule
			{\small\sc Method}
			                                                    & {\small\sc Select}
			                                                    & {\small\sc Reduce\,($\forall\,p$)}
			                                                    & {\small\sc Connect\,($\forall\,p$)}                   \\
			\midrule
			MinCut\,\cite{bianchi2020spectral}                  &
			$\Sb_0^{(\ell)}= \text{MLP}(\Xb_0^{(\ell)})$        &
			$\Xb_p^{(\ell+1)}=(\Sb_p^{(\ell)})^T\Xb_p^{(\ell)}$ &
			$B_p^{(\ell+1)} = (\Sb_{p-1}^{(\ell)})^T \Bb_p \Sb_p$                                                       \\
			\midrule
			NFM\,\cite{Bacciu2019_NMFgraphPool}                 & $\Ab_0^{(\ell)} = \mathbf{W}\mathbf{H} \rightarrow
			\Sb_0^{(\ell)}= \mathbf{H}^T$                       &
			$\Xb_p^{(\ell+1)}=(\Sb_p^{(\ell)})^T\Xb_p^{(\ell)}$ &
			$B_p^{(\ell+1)} = (\Sb_{p-1}^{(\ell)})^T \Bb_p \Sb_p$                                                       \\
			\midrule
			LaPool\,\cite{Noutahi2019_LaPool}                   & $
				\begin{cases}
					\mathbf{V}     & = \lVert \mathbf{L_0} \mathbf{X}_0 \rVert_d                                                \\
					\mathbf{i}     & = \{i | \forall j \in \mathcal{N}(i) :
					\mathbf{V}_i \geq \mathbf{V}_j\}                                                                            \\
					\Sb_0^{(\ell)} & =\text{SparseMax}\left(\beta\frac{\mathbf{x}_0\mathbf{x}_{0,i}^T}{\lVert\mathbf{x}_0\rVert
						\lVert\mathbf{x}_0,i\rVert}\right)
				\end{cases}$
			                                                    & $\Xb_p^{(\ell+1)}=(\Sb_p^{(\ell)})^T\Xb_p^{(\ell)}$ &
			$B_p^{(\ell+1)} = (\Sb_{p-1}^{(\ell)})^T \Bb_p \Sb_p$                                                       \\
			\bottomrule
		\end{tabular}}\vspace{1em}
	\caption{Trainable Graph Pooling Methods in the SRC framework, with necessary adjustments
		for higher-dimensional use within \nervePool{}.
MLP is a multi-layer perceptron defined on the vertex features,
		$\beta$ a regularization vector, and $\mathbf{i}$ a vector
		of indices.
Table adapted from \cite{2021_SRCgraphpooling}.
}\label{Tab:SRC_table}
\end{table}
 
\subsection{\nervePool{} extended using DiffPool graph pooling}
For DiffPool \cite{ying2018diffpool} graph pooling, the
vertex clusters are learned using a GNN, so for the
simplicial complex extension, $\Sb_0^{(\ell)}$ can be learned
with an MPSN layer \cite{bodnar2021simplicial}.
Using diagonal block matrices of the extended matrix
$\Sb^{(\ell)}$, the reduce step is applied for every
dimension of the complex.
Finally, the connections of the pooled complex are determined
by multiplying cluster assignment matrices against boundary
matrices, for each dimension of the complex.
\nervePool{} (specifically using DiffPool for vertex cluster
assignments) is summarized in Table \ref{Tab:diffPool} in
terms of the necessary select, reduce, and connect steps.
\begin{table}
	\centering
	{
		\renewcommand{\arraystretch}{1.2}\begin{tabular}{lp{0.7\linewidth}}
			\toprule
			{\small\sc Select}                  & $\Sb_0^{(\ell)} =
				\softmax[\MPSN_{\ell,pool}(\Bb_1^{(\ell)},
			\Xb_0^{(\ell)},\Xb_1^{(\ell)})]$                          \\
			\midrule
			{\small\sc Reduce\,($\forall\,p$)}  & $\Xb_p^{(\ell+1)} =
				(\Sb_p^{(\ell)})^T \cdot
				\MPSN_{\ell,embed}(\Bb_p^{(\ell)},\Bb_{p+1}^{(\ell)},
			\Xb_{p-1}^{(\ell)},\Xb_p^{(\ell)},\Xb_{p+1}^{(\ell)})$    \\
			\midrule
			{\small\sc Connect\,($\forall\,p$)} & $\Bb_p^{(\ell+1)} =
			(\Sb_{p-1}^{(\ell)})^T \Bb_p \Sb_p$                       \\
			\bottomrule
		\end{tabular}
	}
	\vspace{1em}
	\caption{DiffPool adjustment for \nervePool{} in the SRC \cite{2021_SRCgraphpooling} framework.}\label{Tab:diffPool}
\end{table}
 Additionally, Fig.~\ref{fig:SX_pooling_diagram} shows an
overview flowchart of \nervePool{}, as implemented via
matrices, when the choice of vertex clusters is based on
DiffPool-style graph pooling.
To extend DiffPool on graphs to simplicial complex
coarsening, we must use separate MPSNs to pool features on
simplices for each dimension and an additional MPSN to find a
soft partition of vertices.
These separate message passing networks (\textit{embedding
	MPSNs} and the \textit{pooling MPSN}) require distinct
learnable parameter matrices $\Theta_{p,embed}^{(\ell)}$ and
$\Theta_{pool}^{(\ell)}$.

For each pooling layer, a collection of \textit{embedding
	MPSNs} take as input the features on simplices (for the
current dimension $\Xb_{p}^{(\ell)}$, a dimension lower
$\Xb_{p-1}^{(\ell)}$, and a dimension higher
$\Xb_{p+1}^{(\ell)}$) and boundary matrices (for the current
dimension $\Bb_{p}^{(\ell)}$ and one dimension higher
$\Bb_{p+1}^{(\ell)}$).
The output of which is an embedding matrix $\Zb_p^{(\ell)}\in
	\R^{n_p^{(\ell)}\times d_p^{(\ell)}}$ for every dim $0\leq p
	\leq \mathcal{P}^{(\ell)}$ of the complex $K^{(\ell)}$.
Here,
$n_p^{(\ell)}$ is the number of $p$-simplices at layer $\ell$,
and $d_p^{(\ell)}$ is the number of $p$-simplex features.
These matrices are the learned $p$-simplex embeddings defined by
\begin{equation*}
	\Zb_p^{(\ell)}=
	\MPSN_{\ell,embed}(\Bb_p^{(\ell)},\Bb_{p+1}^{(\ell)}\Xb_{p-1}^{(\ell)},
	\Xb_{p}^{(\ell)},\Xb_{p+1}^{(\ell)}; \Theta_{p,embed}^{(\ell)}) \, ,
\end{equation*}
where $\Theta_{p,embed}^{(\ell)}$ is the matrix of learnable
weights at layer $\ell$.
Note that for the \textit{pooling MPSN}, which is restricted
to dimension $p=0$, the message passing reduces to a function
\begin{equation*}
	\MPSN_{\ell,pool}(\Bb_1^{(\ell)}, \Xb_0^{(\ell)},\Xb_1^{(\ell)}) \, .
\end{equation*}
The second component of \nervePool{}, illustrated by the
right-side branch in Fig.~\ref{fig:SX_pooling_diagram},
determines the pooling structure through cluster assignments
for each simplex.
For an input simplicial complex, $K^{(\ell)}$, with features
on $p$-simplices,
$\{\Xb_p^{(\ell)}\}_{p=0}^{\mathcal{P}^{(\ell)}}$, we use an
MPSN on the underlying graph of the simplicial complex.
In the same fashion as the DiffPool method for graph pooling
(Equation $5$ in \cite{ying2018diffpool}), we generate an
assignment matrix for vertices (vertex cover) via
\begin{equation*}
	\Sb_0^{(\ell)}=
	\softmax\left[\MPSN_{\ell,pool}(\Bb_1^{(\ell)},
		\Xb_{0}^{(\ell)},\Xb_{1}^{(\ell)}; \Theta_{pool}^{(\ell)})
		\right] \, ,
\end{equation*}
where the output $\Sb_0^{(\ell)}\in\R^{n_0^{(\ell)}\times
	n_0^{(\ell+1)}}$ is the learned vertex cluster assignment
matrix at layer $\ell$.
Note that the MPSN for dimension $p=0$ acting on the
$1$-skeleton of $K^{(\ell)}$, reduces to a standard GNN.
Learning the pooling assignment matrix requires only this
single MPSN, for dimension $p=0$ simplices.
From this single pooling MPSN, $\Sb_0^{(\ell)}$ gives a soft
assignment of each vertex at layer $\ell$ to a cluster of
vertices in the next coarsened layer $\ell +1$.

In terms of network training, the DiffPool learning scheme
for $\Sb_0^{(\ell)}$ uses two additional loss terms added to
the normal supervised loss for the entire complex.
These loss terms are added from each \nervePool{} layer to
the simplicial complex classification loss during training of
the entire network to encourage local pooling of the
simplicial complex and to limit overlapping cluster assignments.
The first term is a link prediction loss, which encourages
clustering of vertices which are spatially nearby on the
simplicial complex,
\begin{equation*}
	\mathcal{L}_{LP} = ||\Ab_0^{(\ell)} , \,
	\Sb_0^{(\ell)}(\Sb_0^{(\ell)})^T ||_F\,,
\end{equation*}
where $||\cdot||_F$ denotes the Frobenius norm.
The second term is a cross entropy loss, which is used to
limit the number of different clusters a single vertex can be
assigned to, given by
\begin{equation*}
	\mathcal{L}_{E} = \frac{1}{n_0^{(\ell)}}
	\sum_{i=1}^{n_0^{(\ell)}} H(\Sb_0^{(\ell)}(i,\cdot))\, ,
\end{equation*}
where $H$ is the entropy function and
$\Sb_0^{(\ell)}(i,\cdot)$ the $i$-th row of the vertex
cluster assignment matrix at layer $\ell$.
Both of the additional loss terms $\mathcal{L}_{LP}$ and
$\mathcal{L}_{E}$ are minimized and for each \nervePool{}
layer, added to the total simplicial complex classification
loss at the end of the network.
Training a \nervePool{} layer reduces to learning a soft
partition of vertices of a graph, since the extension to
coarsening of higher-dimensional simplices are deterministic
functions of the vertex clusters.
\begin{figure}
	\centering
	\includegraphics[width=.9\textwidth]{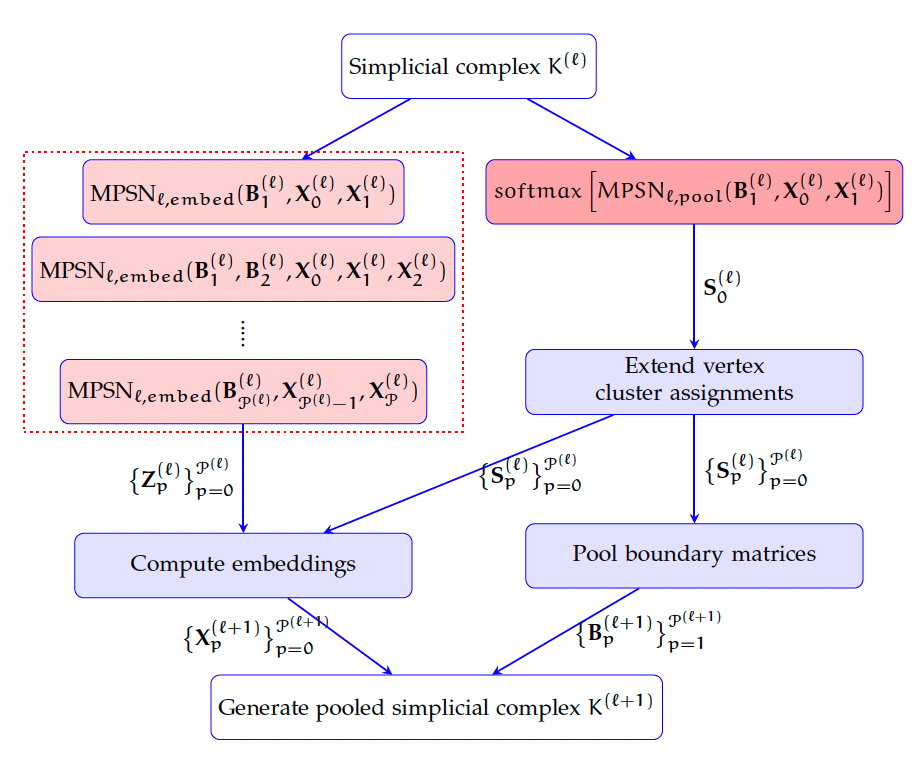}
	\caption{Example \nervePool{} architecture, using DiffPool
		as the motivating vertex cluster method. Takes input
		simplicial complex $K^{(\ell)}$ and returns the pooled
		simplicial complex $K^{(\ell+1)}$. The left-side branch
		uses a collection of MPSNs to compute embeddings for each
		dimension (Reduce step). The right-side branch illustrates
		using an MPSN to compute vertex cluster assignments (Select
		step), and then extending assignments to higher dimensional
		simplices so that this structural information facilitates
		collapsing of simplices when applied against boundary
		matrices (Connect step).}
	\label{fig:SX_pooling_diagram}
\end{figure}

\subsection{Implementation and computational considerations} \label{sec:ImplementationAndComputation}
For our experiments and for the implementation of \NervePool{} we follow the
matrix implementation and use the full cluster assigment matrix. 
Before proceeding with the full implementation we mention that some
modifications are to be made before we can apply \NervePool{} in a machine
learning context.
In graph learning it is often advantageous to \emph{learn} the cluster assignments
from the data through a separate GNN to allow for a data-specific clustering.
Based on the predicted cluster assignments one can then construct the
pooled simplicial complex.
To allow for maximum expressivity we adapt the matrix formulation to allow
for probabilistic cluster assignments, which in the context of machine learning
can be viewed as a weight assigned to the node features when computing the
resulting feature in the pooled complex.

In a machine learning context, the set of simplices of dimension $k$ is often
represented as a tensor of size $(k+1)\times \ell$, where each column represents a
simplex  of dimension containing the indices of vertices spanning the simplex.
Given this representation, the down function (Eqn.~\eqref{eqn:downfunction}) can be viewed
as the maximum operation.
To see this, recall that  the down function is defined as
\begin{equation}
	\Sb_{q,0}[\sigma_q,v_j] :=
	\begin{cases}
		1 & \text{if } \Sb_0[v_m,v_j] = 1 \text{ for any } v_m\in
		\sigma_q                                                  \\
		0 & \text{ otherwise.}
	\end{cases}
\end{equation}
Notice that this is equivalent to
\begin{equation}
	\Sb_{q,0}[\sigma_q,v_j] = \text{max}\left\{ \Sb_0[v_m,v_j]  | v_m\in\sigma_q \right\},
\end{equation}
and this can viewed as rowwise maximum of the rows that span the simplex $\sigma_q$, which is to say that
\begin{equation}
	\Sb_{q,0}[\sigma_q,:] = \text{max}\left\{ \Sb_0[v_m,:]  | v_m\in\sigma_q \right\},
\end{equation}

In graph learning, simplicial complexes are often represented through their
node indices.
We first arbitrarily label vertices $v_i$ with an index and ensure and order
the matrix containing the features such that in both the feature matrix X and
cluster assignment matrix $S$ row $i$ corresponds to vertex $v_i$.
With this convention, the implementation of the down function admits
a particularly fast implementation that can take advantage of hardware
acceleration, since the down function becomes an indexing operation together
with a max operation.

In a similar fashion, the right function can be expressed as a minimum operation
over the columns.
Recall that the right function is defined as
\begin{equation}
	\Sb_{q,p}[\cdot ,\sigma_{p}] := \odot_{v_i \in \sigma_{p}}\Sb_{q,0}[\cdot, v_i].
\end{equation}
The product is $1$ if all entries are $1$ and zero else, which
is equivalent to taking the minimum over the columns and also allows
a particularly fast and parallel implementation.
An additional advantage of the above mentioned modifications is that the
resulting cluster assignment matrix allows the node features to be weighted
according to the learned cluster assignments and enhances the expressivity of
the pooling operation with respect to the features.

\revision{
The above formulation permits a vectorized and hardware accelerable
algorithm for \nervepool{}, which can be used as a machine learning layer for 
pooling simplicial complexes.
Due to its widespread use in machine learning, we implemented
\nervepool{} in \texttt{torch\_geometric}~\cite{fey2019_PyG}.
One drawback of the current implementation is that by default we consider
all possible virtual simplices in each dimension; i.e.~the entire matrix $\Sb$ rather than just the block diagonal.
Given a simplicial complex with $n$ vertices, there are $\binom{n}{k}$ simplices 
in dimension $k$ to consider.
To each virtual simplex we assign a column, which results in the
instantiation of a large matrix when batching is also taken into account.
The right and down functions admit a fast implementation on the GPU,
leading memory to be the current bottleneck.
That said, the current implementation is on par with the other pooling
methods in terms of speed, but there is room for improvement for future work.
}

\revision{
To determine the computational complexity of \nervePool{}, 
let $n_0^{(\ell)}$ be the number of input vertices, let
$C:=n_0^{(\ell+1)}$ be the number of pooled clusters
(meta-vertices), and let $P:=\mathcal{P}^{(\ell)}$ be the
maximal simplex dimension considered by the layer.
For each $q\in\{0,\ldots,P\}$, denote by $n_q^{(\ell)}$ the
number of input $q$-simplices, and define the number of
candidate pooled $p$-simplices by
$	c_p := \binom{C}{p+1}$ for $p=0,\ldots,P$.
If all virtual simplices are considered, the total number of
pooled candidates is
$
	N_{\mathrm{cand}} = \sum_{p=0}^{P} c_p
	$.
The down update (Eq.~\eqref{eqn:downfunction}) computes a
row-wise max over the $(q+1)$ vertices of each $q$-simplex,
giving
\begin{equation*}
	T_{\mathrm{down}} = \sum_{q=1}^{P}
	O\left((q+1)\,n_q^{(\ell)}\,C\right),\qquad
	M_{\mathrm{down}} = \sum_{q=1}^{P}
	O\left(n_q^{(\ell)}\,C\right).
\end{equation*}
The right update (Eq.~\eqref{eqn:rightfunction}) computes
row-wise minima/products across $(p+1)$ columns for each
candidate pooled $p$-simplex:
\begin{equation*}
	T_{\mathrm{right}} = \sum_{q=1}^{P}\sum_{p=1}^{q}
	O\left((p+1)\,n_q^{(\ell)}\,c_p\right),\qquad
	M_{\mathrm{right}} = \sum_{q=1}^{P}\sum_{p=1}^{q}
	O\left(n_q^{(\ell)}\,c_p\right).
\end{equation*}
Hence, in our current implementation, memory is dominated by storing sub-blocks of $\Sb^{(\ell)}$, and scales exponentially in $C$ through $c_p=\binom{C}{p+1}$.
Future work can determine if and how much improvement is potentially possible in terms 
of memory through mitigation strategies such as limiting maximal simplex dimension, restricting candidate simplices to those present in the original complex, using sparsity patterns from boundary matrices, or sampling-based approximations for the right-update. 
}

\section{Experiments} \label{sec:Experiments}

Given our general framework for pooling for simplicial complexes, we
asses our method in a comprehensive suite of experiments.
\nervepool{} is evaluated on a graph datasets, as all comparison partners
are restricted to graph level tasks, in particular the task of
graph classification.
\revision{
To ensure that our method also works for higher order simplicial
complexes, we evaluate NervePool on a topological
dataset~\cite{Ballester25a} of $2$-manifolds. 
}
We are particularly interested in the capacity of \nervepool{} to
capture and summarize the global structure of the graphs and meshes.

\subsection{Preprocessing and experimental setup.}
Following the experimental section of DiffPool~\cite{ying2018diffpool},
we evaluate our method on several benchmark datasets~\cite{morris2020tu}
that have graph classification as the target.
The graphs in the datasets are filtered to have a fixed maximum number of
nodes, which is set to $700$ for Proteins, $500$ for DD and $150$ for
the other datasets.
We use a $70\%/10\%/20\%$ train/validation/test split.
As a loss term, we use \emph{categorical cross entropy} for graph
classification and the Adam optimizer~\cite{kingma2015adam} with a learning rate of $0.001$ for all
models. We train all models for $50$ epochs and repeat said training
with $5$ different seeds, reporting the mean and standard deviation of
the classification accuracy.

\subsection{Architectures.}
The base architecture for all models consists of three GCN layers with
$64$ hidden channels interleaved with two graph pooling layers.
To ensure a fair comparison between the different pooling methods, we fix
the base architecture and only change the pooling method; while further
changes might yield to slightly improved results, our current setup
demonstrates that \nervePool{} provides a stable pooling layer that can
plugged into any architecture.
As both \nervepool{} and DiffPool require a learned cluster assignment for
the pooling, we use an embedding GNN for each pooling layer that predicts
the node clustering assignments.
As final aggregation of the graph representation we calculate the mean
across the hidden features and the final prediction is done with
a $2$-layer MLP with ReLU activation.

\subsection{Classifying Geometric Graphs.}
The results are reported in
Table~\ref{tab:experiment_graph_classification}. We find \modelname{} to
perform on a par with other pooling methods. In the first column, we
show the average rank across the four data sets, and see that while
DiffPool performs best, \NervePool{} is a consistently close second.
This is particularly notable given that \modelname{} is a more general
method, applicable to higher-order simplicial complexes, while the other
methods are restricted to graphs.
Most notably, \modelname{} incorporates the edge features in a natural
and mathematically grounded fashion through the $S$ matrix at each layer.
Since the graphs in our datasets do not contain edge features, we set
them to the constant zero vector.
This means that the edge features do not contribute to the clustering
in this case, but they are still pooled and passed to the next layer.
In future work it would be interesting to see if the inclusion of edge
data in this setup will yield a more significant advantage for
\nervepool{}.

For future work, we believe it would be interesting to check if the
inclusion of edge data in this setup will yield a more significant
advantage for \nervepool{}. We note also that very little is needed for
\nervePool{} in the way of user-defined inputs compared to the other
methods; in particular, it does not require specifying a pooling ratio
in advance. Users only have to choose an architecture into which
\nervePool{} should be integrated as well as a loss term for training
the model, i.e., standard requirements for any graph learning model.
This, coupled with the fact that \nervePool{} essentially provides
\emph{probabilities}, makes it a conceptual extension and generalization
of DiffPool.

\begin{table}
	\centering
	\sisetup{
		detect-all              = true,
		table-format            = 2.2(2),
		detect-mode             = true,
		separate-uncertainty    = true,
		retain-zero-uncertainty = true,
		table-align-text-after  = false,
		mode                    = text,
	}\begin{tabular}{l S SS SS}
	\toprule
	{\small\sc Model} & {Average Rank} & {\small\sc DD} & {\small\sc Enzymes} & {\small\sc MUTAG} & {\small\sc PROTEINS} \\
	\midrule

	DiffPool          & 1.00           & 0.78 \pm 0.02  & 0.38 \pm 0.06       & 0.84 \pm 0.07     & 0.75 \pm 0.02        \\
	NoPool            & 3.25           & 0.71 \pm 0.04  & 0.32 \pm 0.05       & 0.72 \pm 0.06     & 0.72 \pm 0.04        \\
	SAGPool           & 3.50           & 0.69 \pm 0.02  & 0.29 \pm 0.06       & 0.8 \pm 0.03      & 0.71 \pm 0.04        \\
	TopKPool          & 5.00           & 0.69 \pm 0.06  & 0.28 \pm 0.06       & 0.69 \pm 0.09     & 0.7 \pm 0.04         \\
	\midrule
	NervePool         & 2.25           & 0.73 \pm 0.05  & 0.33 \pm 0.05       & 0.77 \pm 0.19     & 0.75 \pm 0.02        \\
	\bottomrule
\end{tabular}
 	\vspace{1em}
	\caption{Graph classification results for the TU Datasets were the average
		accuracy over $5$ runs is reported.
As model performance varies across the datasets, we rank the models
		from $1$ (best performance) to $5$ (worst performance) and compute
		the average rank across the datasets.
\nervePool{} consistently ranks second, showing a good average
		performance.
	}
	\label{tab:experiment_graph_classification}
\end{table}

\subsection{Classification of manifolds}
\revision{
To highlight the fact that \nervePool{} also works well with higher
order datasets, we evaluate our pooling method on the
MANTRA~\cite{Ballester25a} dataset.
The dataset consists of the full enumeration of $2$-manifolds with up to 
$10$ nodes and contains $21$ types of manifolds.
Following the methodology for the graph learning benchmark, we evaluate 
the same models for the classification of manifolds. 
We keep the architecture and preprocessing the same, with the difference
that \nervePool{} is allowed to operate on the full simplicial complex,
rather than just the $1$-skeleton as required by the rest of the pooling methods tested. 
By definition the dataset does not include node features and to ensure
the models learn the topological and combinatorial structure, we add
random node features to each vertex.
The dataset has a total of up to $10$ nodes, with the vast majority
falling in this category;
depending on the pooling method, we  pool the complex down to $5$ nodes when a hard threshold is required, or 50\% of the input vertices when a ratio is required.
The results for the manifold classification task are presented in
Table~\ref{tab:experiment_mantra} and show \nervePool{} to outperform
all other methods. 
The topological nature of the task suggests that \nervePool{} preserves
the important topological features of the manifold for later downstream
classification, whereas the graph methods lose this structure.
}
\begin{table}[h]
	\centering
	\sisetup{
		detect-all              = true,
		table-format            = 2.2(2),
		detect-mode             = true,
		separate-uncertainty    = true,
		retain-zero-uncertainty = true,
		table-align-text-after  = false,
		mode                    = text,
	}\begin{tabular}{l Sc}
	\toprule
	{\small\sc Model} & {\small\sc MANTRA} & {\small Rank}  \\ 
	\midrule
  DiffPool & 0.47 \pm 0.01 & 3 \\
  NoPool & 0.46 \pm 0.01 & 5 \\
  SAGPool & 0.89 \pm 0.01 & 2 \\
TopK & 0.34 \pm 0.01 & 6\\
  \midrule
  NervePool & 0.92 \pm 0.01 & 1 \\
  \bottomrule
\end{tabular}

 	\vspace{1em}
	\caption{\revision{
    Classification results on the MANTRA dataset of $2$-manifolds with
    random node features; we use only graph input information for the first four methods, while \NervePool{} uses higher dimensional simplices. 
In this experiment,\nervePool{} shows superior performance and outperforms all other methods. 
Learning the clustering allows our method to preserve important
    properties of the simplicial complex and thus allows better expressivity. 
}
  }
	\label{tab:experiment_mantra}
\end{table}
 \section{Conclusions and future directions}
\label{sec:conclusions}
In this paper, we have defined a method to extend learned
graph pooling methods to simplicial complexes.
Given a learned partition of vertices, \nervePool{} returns a
coarsened and simplified simplicial complex, where simplices
and signals defined on those simplices are redistributed on
the output complex.
This framework for simplicial complex coarsening is a very
flexible method because the input partition of the vertex set
can come from any choice of standard graph pooling method or
learned vertex clustering.
We show that there is a choice of input cover on the vertices
such that \nervePool{} returns the same simplicial complex
(up to re-weighting) and that when used in the context of a
simplicial neural network with hard vertex clusters, it is a
simplex-permutation invariant layer.
Additionally, we prove the equivalence of the nerve/cover
topological interpretation and matrix implementation using
boundary matrices for the setting restricted to hard vertex partitions.
This pooling layer has potential applications in a range of
deep learning tasks such as classification and link
prediction, in settings where the input data can be naturally
modeled as a simplicial complex.
\nervePool{} can help to mitigate the additional computation
cost of including higher dimensional simplices when using
simplicial neural networks:
reducing complex dimension, while redistributing information
defined on the simplicial complex in a way that is optimized
for the given learning task.

A limitation of this method is that using standard graph
pooling methods for the initial clustering of vertices limits
the learned influence of higher-dimensional simplices.
Future work in this direction to include more topological
information in the learning of vertex clusters could enhance
the utility of this method for topologically motivated tasks.
For graph pooling methods with auxiliary loss terms such as
DiffPool \cite{ying2018diffpool} and MinCutPool
\cite{bianchi2020spectral}, adjustment or inclusion of
additional auxiliary topological loss terms to encourage
vertex clusters with topological meaning would be an
interesting line of inquiry.
Taking into account the higher-dimensional structure for the
learned pooling on underlying graphs could allow \nervePool{}
to coarsen simplicial complexes, tuned such that specified
topological structure is retained from the original complex.

While the current assumptions of \nervePool{} include loss
functions only defined on the $1$-skeleton, if additional
auxiliary loss terms defined on higher-order simplices were
utilized, then the functions defining \nervePool{}
would need to be continuous and differentiable for gradient
computations.
This is automatic for Equation \ref{eqn:rightfunction}, the
right arrow update, since it defines a series of Hadamard
(entry-wise) products, which are differentiable.
However, Equation \ref{eqn:downfunction} is an ``on or off''
function, which is not differentiable~(while rewriting this
is possible, it still is a hard vertex assignment).
To address this, we could define the relaxation of the down function 
to be the maximum of the entries, which in the case of binary 
entries agrees with the theoretical definition.

In this paper, to compute embeddings of features on
$p$-simplices
$\{\Xb_p^{(\ell+1)}\}_{p=0}^{\mathcal{P}^{(\ell+1)}}$, we
choose to apply separate coarsening for each dimension of the
input simplicial complex in Equation \ref{eqn:poolingX}.
This choice enforces that information sharing for the signals
defined on simplices only affect signals on pooled simplices
within a given dimension.
Alternatively, future modifications could utilize the entire
block matrix $\Sb^{(\ell)}$ applied against a matrix
containing signals on simplices of all dimensions, which
would allow for signal information to contribute to pooled
simplices amongst different dimensions.

\revision{
We have also restricted our theoretical results to hard partitions of the vertex set, but everything can be defined and used in the case of soft partitions. 
It would be interesting in future work to explore the theoretical properties of \nervePool{} in the case of soft partitions, if the matrix and topological formulations are equivalent in this case (as in Thm.~\ref{thm:topEquivMat}), and to see if there are any advantages to using soft partitions in practice.
Another open direction for work would be to understand if improvement could be made by enforcing the input partitions in such a way as to return good covers for \nervePool{} and thus allow for the application of the nerve lemma. 
}
\revision{
We also note that all work in this paper forgot any potential orientation of simplices. 
There is room for improvement to see if tracking the orientation of simplices in the pooling process would still satisfy the theoretical guarantees, and/or if it would yield better results in practice in cases where orientation of simplices matters.
}

Moving from a theoretical framework to an implementation 
requires certain architectural decisions and further 
investigating different choices could provide interesting 
insights into the capacity of \nervePool{}.
\revision{
Relaxing the down and right functions further to use the product 
and sum instead of minimum and maximum could potentially improve gradient flow 
to the pooling networks. 
}
Additional future work on simplicial complex coarsening
should include experimental results to identify cases where
including simplicial pooling layers improve, for example,
classification accuracy.
Also, comparison experiments at different layers of a
simplicial neural network would be useful to determine if the
pooled complexes are reasonable and/or meaningful
representations of the original simplicial complexes, with
respect to a given task.

\paragraph*{Funding}
This work was funded in part by the National Science Foundation, through grants CCF-1907591, CCF-2106578, and CCF-2142713 as well as NIH NIGMS-R01GM135929. This work has received funding from the Swiss State Secretariat for Education, Research, and Innovation (SERI).

\paragraph*{Acknowledgements}
Sarah McGuire and Ernst R\"oell should be considered as co-first authors. The authors would like to thank Sourabh Palande for helpful discussions.

\bibliography{references}
\end{document}